\def\computationproblem#1#2#3{
  \begin{center}
  \begin{tabular}{rp{0.8\textwidth}}
  {\sc Problem:\enspace}&#1\\
  {\sc Input:\enspace}&#2\\
  {\sc Question:\enspace}&#3\\
  \end{tabular}
  \end{center}
}
\newtheorem{theorem}{Theorem}
\newtheorem{lemma}[theorem]{Lemma}
\newtheorem{proposition}[theorem]{Proposition}
\newtheorem{observation}[theorem]{Observation}
\newtheorem{problem}[theorem]{Problem}
\newtheorem{corollary}[theorem]{Corollary}
\newdefinition{definition}[theorem]{Definition}
\newproof{proof}{Proof}
\begin{document}

\begin{frontmatter}

\title{Computational Complexity of Covering Multigraphs with Semi-edges: Small Cases\tnoteref{funding}
}

\tnotetext[funding]{An extended abstract of this paper previously appeared at MFCS 2021 as~\cite{n:BFHJK-MFCS}. The second, fourth, and fifth author were supported by research grant GAČR 20-15576S of the Czech Science Foundation. The third author was supported by research grant GAČR 20-04567S of the Czech Science Foundation. The fourth author was partially supported by GAUK 370122. In the late stages of preparing the manuscript, the first author was partially supported by the ANR project GRALMECO (ANR-21-CE48-0004), by the French government IDEX-ISITE initiative 16-IDEX-0001 (CAP 20-25), and by
European Union (ERC, POCOCOP, 101071674). Views and opinions expressed are however those of the author(s) only and do
not necessarily reflect those of the European Union or the European Research Council Executive Agency. Neither the European
Union nor the granting authority can be held responsible for them.}

\author[5,4]{Jan Bok\corref{cor1}}
\cortext[cor1]{Corresponding author}
\ead{jan.bok@matfyz.cuni.cz}

\author[3]{Jiří Fiala}
\ead{fiala@kam.mff.cuni.cz}

\author[2]{Petr Hliněný}
\ead{hlineny@fi.muni.cz}

\author[5,3]{Nikola Jedličková}
\ead{jedlickova@kam.mff.cuni.cz}

\author[3]{Jan Kratochvíl}
\ead{honza@kam.mff.cuni.cz}

\address[1]{Computer Science Institute, Faculty of Mathematics and Physics, Charles University, Prague, Czech Republic}
\address[2]{Faculty of Informatics, Masaryk University, Brno, Czech Republic}
\address[3]{Department of Applied Mathematics, Faculty of Mathematics and Physics, Charles University, Prague, Czech Republic}
\address[4]{Universit\'e Clermont Auvergne, CNRS, Clermont Auvergne INP, Mines Saint-\'Etienne, LIMOS, 63000 Clermont-Ferrand, France}
\address[5]{Department of Algebra, Faculty of Mathematics and Physics, Charles University, Prague, Czech Republic}

\begin{keyword}
graph cover \sep covering projection \sep semi-edges \sep multigraphs \sep complexity

\MSC[2010] 05C60 \sep 05C22 \sep 05C85

\end{keyword}

\begin{abstract}
We initiate the study of computational complexity of graph coverings, aka
locally bijective graph homomorphisms, for {\em graphs with semi-edges}. The
notion of graph covering is a discretization of coverings between surfaces
or topological spaces, a notion well known and deeply studied in classical
topology. Graph covers have found applications in discrete mathematics for
constructing highly symmetric graphs, and in computer science in the theory of
local computations. In 1991, Abello, Fellows, and Stillwell asked for a classification of
the computational complexity of deciding if an input graph covers a fixed
target graph, in the ordinary setting (of graphs with only edges). 
Although many general results are known, the full classification is still open.
In spite of that, we propose to study the more general case of covering
graphs composed of normal edges (including multiedges and loops) and so-called semi-edges.
Semi-edges are becoming increasingly popular in modern topological graph theory, 
as well as in mathematical physics.
They also naturally occur in the local computation setting, 
since they are lifted to matchings in the covering graph.
We show that the presence of semi-edges makes the covering problem considerably
harder; e.g., it is no longer sufficient to specify the vertex mapping induced
by the covering, but one necessarily has to deal with the edge mapping as well. 
We show some solvable cases and, in particular, completely characterize the complexity of
the already very nontrivial problem of covering one- and two-vertex (multi)graphs with semi-edges. 
Our NP-hardness results are proven for simple input graphs, and in the case 
of regular two-vertex target graphs, even for bipartite ones. 
We remark that our new characterization results also strengthen previously known
results for covering graphs without semi-edges,
and they in turn apply to an infinite class of simple target graphs
with at most two vertices of degree more than two.
Some of the results are moreover proven in a more general setting (e.g., finding $k$-tuples of pairwise
disjoint perfect matchings in regular graphs).  
\end{abstract}

\end{frontmatter}

\section{Introduction}\label{s:intro}

\subsection{Graph coverings and complexity}

The notion of a {\em graph covering} is a discretization of coverings between surfaces or topological spaces, a notion well known and deeply studied in classical topology. 
A covering projection from one topological space to another one is a many-to-one mapping such that every point of the covering space and its image in the covered one have open neighborhoods which are mapped homeomorphically to each other by the projection. This is a motivation for the introduction of the notion of a graph covering projection as a many-to-one mapping between two graphs such that, informally speaking, every vertex of the covering graph and its image ``look alike'' with respect to the edges that are incident with them. More precisely, if the image $f(x)$ of a vertex $x$ of the covering graph is adjacent to another vertex $y$, $x$ has exactly one neighbor $z$ which is mapped to $y$ by the covering projection $f$.  In other words, if a graph $G$ covers a graph $H$ via a covering projection $f$, the covering graph $G$ is a perfect maze resembling $H$. In the sense that if an agent keeps moving through the vertices of $G$ and when visiting a vertex, say $x$, the agent can see the image $f(x)$ of $x$ and for each edge incident with $x$ the image of the other end-vertex of this edge, (s)he cannot distinguish if (s)he is moving through $G$ or through $H$. 

Graph coverings have found many applications. Primarily as a tool for construction of highly symmetric graphs~\cite{k:Biggs74,n:Djokovic74,n:Gardiner74,n:GT77}, or for embedding complete graphs in surfaces of higher genus~\cite{k:Ringel74}.

Graph coverings attracted attention of computer scientists as well. Angluin~\cite{n:Angluin80} exploited graph covers when introducing models of local computations, namely by showing that a graph and its cover cannot be distinguished by local computations. Later, Litovsky, M\'{e}tivier, and Zielonka~\cite{n:LMZ93} proved that planar graphs and series-parallel graphs cannot be recognized by local computations, and Courcelle and M\'{e}tivier~\cite{n:CM94} showed that in fact no nontrivial minor-closed class of graphs can. In both of these results, graph coverings were used as the main tool, as well as in more recent papers~\cite{n:ChMZ06,n:ChalopinP11}. Here, the authors presented a model for distributed computations and addressed the algorithmic complexity of problems associated with such a model. To this end, they used the existing results on NP-completeness of the covering problem to provide their hardness results. In~\cite{packing_bipartite}, the authors study a close relation of packing bipartite graphs to a special variant of graph coverings called \emph{pseudo-coverings}.

Another connection to algorithmic theory comes through the notions of the {\em degree partition} and the {\em degree refinement matrix} of a graph. These notions were introduced by Corneil~\cite{Corneil68,n:CorneilG70} in hope of solving the graph isomorphism problem efficiently. It can be easily seen that a graph and all of its covers have the same degree refinement matrix. 
Motivated by this observation, Angluin and Gardiner~\cite{n:AG81} proved
that any two finite regular graphs of the same valency have a finite common
cover, and conjectured the same for every two finite graphs
with the same degree refinement matrix, which was later proved by
Leighton~\cite{n:Leighton82}.

The stress on finiteness of the common cover is natural. For every matrix,
there exists a universal cover, an infinite tree, that covers all graphs
with this degree refinement matrix.  Trees are planar graphs, and this
inspired an at first sight innocent question of which graphs allow a finite
planar cover.  Negami observed that projective planar graphs do (in fact,
their double planar covers characterize their projective embedding), and
conjectured that these two classes actually coincide~\cite{n:Negami88}. 
Despite a serious effort of numerous authors, the
problem is still open, although the scope for possible failure of Negami's
conjecture has been significantly reduced
\cite{n:Archdeacon02,n:Hlineny98,n:HT04}.

A natural computational complexity question is how difficult is to decide, given two graphs, if one covers the other one. 
This question is obviously at least as difficult as the graph isomorphism problem (consider two given graphs on the same number of vertices). It was proven NP-complete by Bodlaender~\cite{n:Bodlaender89} (in the case of both graphs being part of the input). Abello, Fellows, and Stillwell~\cite{n:AFS91} initiated the study of the computational complexity of the $H$-cover problem for a fixed target graph $H$ by showing that deciding if an input graph covers 
the dumbbell graph $W(0,1,1,1,0)$ (in our notation from Section~\ref{s:twovertex}) is NP-complete (note that the dumbbell graph has loops, and they also allowed the input graph to contain loops). 
Furthermore, they asked for a complete characterization of the computational
complexity, depending on the parameter graphs $H$.  Such a line of research
was picked by Kratochv\'{\i}l, Proskurowski, and Telle, who
completely characterized the complexity for simple target graphs with at
most 6 vertices~\cite{n:KPT98}, and then noted that in order to fully
characterize the complexity of the $H$-cover problem for simple target
graphs, it is sufficient (but also necessary) to classify it for mixed
colored multigraphs with minimum degree at least three~\cite{n:KPT97a}.  The
latter result gives a hope for a more concise description of the
characterization, but is also in line with the original motivation
of covers
from topological graph theory, where loops and multiedges are widely considered.

The complexity of covering $2$-vertex multigraphs was fully characterized in~\cite{n:KPT97a}, the characterization for 3-vertex undirected multigraphs can be found in~\cite{n:KratochvilTT16}. The most general NP-hardness result known so far is the hardness of covering simple regular graphs of valency at least three~\cite{n:Fiala00b,n:KPT97}. More recently, the authors of~\cite{n:BilkaJKTV11} proved that covering several concrete small graphs (including the complete graphs $K_4, K_5$ and $K_6$) remains NP-hard for planar inputs. This shows that planarity does not help in graph covering problems in general, yet the conjecture that the \textsc{$H$-Cover} problem restricted to planar inputs is at least as difficult as for general inputs, provided $H$ itself has a finite planar cover, remains still open. Planar graphs have also been considered by the authors of~\cite{n:FialaKKN14} who showed that for planar input graphs, \textsc{$H$-RegularCover} is in FPT when parameterized by $H$.
This is in fact the first and only paper on the complexity of regular covers, i.e., covering projections determined by a regular action of a group of automorphisms on the covering graph.

Graph coverings were also extensively studied under a unifying umbrella of
\emph{locally constrained homomorphisms}.  In these relaxations,
homomorphisms can be either locally injective or locally surjective and not
necessarily locally bijective.  The computational complexity of locally
surjective homomorphisms has been classified completely, with respect to the
fixed target graph~\cite{n:FP05}.  
Though the complete classification of the
complexity of locally injective homomorphisms is still out of sight, it has
been proved for its list variant~\cite{n:FK06}.  The problem is also
interesting for its applied motivation -- a locally injective homomorphism
into the complement of a path of length $k$ corresponds to an
$L(2,1)$-labeling of span $k$, an intensively studied notion stemming from
the theory of frequency assignment.  Further generalizations include the
notion of $H(p,q)$-coloring, a homomorphism into a fixed target graph $H$
with additional rules on the neighborhoods of the
vertices~\cite{n:FHKT03,n:KT00}.  
To find more about locally injective
homomorphisms, see e.g.~\cite{n:ChaplickFHPT15,n:FK08,n:MacGillivrayS10}.  
For every fixed graph $H$, the
existence of a locally injective homomorphism to $H$ is provably at least as
hard as the $H$-cover problem.  
In this sense our hardness results extend the state of the art also for the 
problem of existence of locally injective homomorphisms.

\subsection{Graphs with semi-edges}

The topological motivation quickly leads to considering graphs with multiple edges and loops (edges that connect a vertex to itself). When defining covering projections for such graphs (often referred to as \emph{multigraphs}), one should pay attention to mapping the edges as well as mapping the vertices, these two components of the projection must be compatible (incidence-preserving). It can be, however, easily seen that in order to specify a covering projection, it is sufficient to specify the mapping of vertices, as long as it fulfills certain degree constraints (cf.~Theorem~\ref{p:dob3}). Further generalization leads to considering \emph{semi-edges}, i.e., objects that are incident with one vertex only. The difference from loops is that a semi-edge adds 1 to the degree of its incident vertex, while a loop contributes 2. In the topological motivation, a loop provides two ways how the agent can move from a vertex to itself in one step, while a semi-edge opens just one such possibility. Though this interpretation may look artificial, it becomes natural in the covering graphs. A semi-edge can be the image of an edge whose both end-vertices are mapped to the vertex incident with the semi-edge. And thus in the covering graph, each of the end-vertices has exactly one edge leading to a vertex mapped to the same image (by the way, this is also the reason why loops are not allowed to be mapped to semi-edges). 

The notion of {\em semi-edges} has been introduced
in the modern topological graph theory and it is becoming more and more frequently used
(the terminology has not yet stabilized; semi-edges are often called half-edges, and sometimes fins).
Mednykh and Nedela recently wrote a monograph~\cite{nedela_mednykh} in which
they summarize and survey the ambitions and efforts behind generalizing the
notion of graph coverings to the graphs with semi-edges. 
This generalization, as the authors pinpoint, is not artificial
as such graphs emerge ``in the situation of taking quotients of simple graphs by groups of
automorphisms which are semiregular on vertices and darts (arcs) and which
may fix edges''.
As the authors put it: ``A problem arises when one wants
to consider quotients of such graphs (graphs embedded to surfaces) by an
involution fixing an edge $e$ but transposing the two incident vertices. 
The edge $e$ is halved and mapped to a semi-edge --- an edge with one free
end.'' This direction of research proved to be very fruitful and provided
many applications and generalizations to various parts of algebraic graph
theory.  For example, Malni{\v{c}}, Maru\v{s}i\v{c}, and Poto\v{c}nik~\cite{n:MalnivcMP04} considered
semi-edges during their study of abelian covers and as they write ``...in
order to have a broader range of applications we allow graphs to have
semi-edges.'' To highlight a few other contributions, the reader is invited
to consult~\cite{n:NedelaS96,n:MalnicNS00}, the surveys
\cite{kwak2007graphs} and (aforementioned)~\cite{nedela_mednykh}, and
finally for more recent results the series of papers~\cite{n:FialaKKN14,arxiv1609.03013,n:FialaKKN18}.  
It is also worth noting that the concept of graphs with semi-edges was introduced independently and
naturally in mathematical physics by Getzler and
Karpanov~\cite{getzler1998modular}.

From the graph-theoretical point of view, semi-edges allow an elegant and unifying description of several crucial graph-theoretical notions. For instance, as seen in Section~\ref{s:onevertex}, a simple cubic graph is 3-edge-colorable if and only if it covers the one-vertex graph with three semi-edges, and a simple cubic graph contains a perfect matching if and only if it covers the one-vertex graph with one loop and one semi-edge. 

In the view of the theory of local computations, semi-edges and their covers prove very natural, too, and it is even surprising that they have not been considered before in the context. If a computer network is constructed as a cover of a small template, the preimages of normal edges in the covering projection are matchings completely connecting nodes of two types (the end-vertices of the covered edge). Preimages of loops are disjoint cycles with nodes of the same type. And preimages of semi-edges are matchings on vertices of the same type. The role of semi-edges was spotted by Woodhouse~\cite{woodhouse_2021} and Shepherd~\cite{n:Shep22} who have generalized the fundamental theorem of Leighton~\cite{n:Leighton82} on finite common covers of graphs with the same degree refinement matrix to graphs with semi-edges.

The goal of the conference version~\cite{n:BFHJK-MFCS} of this paper was to initiate the study of the computational complexity of covering graphs with semi-edges, and that paper surely opened the door in the intended direction. The authors of~\cite{n:BFJKS21-FCT} studied the computational complexity of covering disconnected graphs and showed that, under an adequate yet natural definition of covers of disconnected graphs, the existence of a covering projection onto a disconnected graph is polynomial-time decidable if an only if it is the case for every connected component of the target graph. In~\cite{DBLP:conf/iwoca/BokFJKR22} the authors explicitly formulated the Strong Dichotomy Conjecture stating that for every fixed target (multi)graph with semi-edges allowed, the covering problem is either polynomial-time solvable for arbitrary input graphs, or NP-complete for simple graphs as input. Most recently,   a complete characterization of the computational complexity of covering colored mixed (multi)graphs with semi-edges whose every class of degree partition contains at most two vertices was presented in~\cite{WG23}. That provides a common generalization of the observation for simple graphs from~\cite{n:KPT97a}, and for the case of regular 2-vertex graphs in our Theorem~\ref{t:2-vertex}. It should be noted, though, that the general NP-hardness reduction in~\cite{WG23} strongly benefits from our Proposition~\ref{prop:covering}, namely from the fact that several of the covering problems remain NP-hard for simple bipartite input graphs. The current paper is an enhanced version of~\cite{n:BFHJK-MFCS} and contains all proofs in full.

\subsection{Formal definitions}

In this subsection we formally define what we call {\em graphs}. A graph has
a set of vertices and a set of edges (also referred to as links).  As it is
standard in topological graph theory, we
automatically allow multiple edges and loops.  Every ordinary edge is
connecting two vertices, every loop is incident with only one vertex.  On
top of these, we also allow {\em semi-edges}.  Each semi-edge is also
incident with only one vertex.  The difference between loops and semi-edges
is that a loop contributes two to the degree of its vertex, while a
semi-edge only one.  A very elegant description of
ordinary edges, loops and semi-edges through the concept of {\em darts} is used in more algebraic-based papers on covers.
The following formal definition is a reformulation of
the one given in~\cite{nedela_mednykh}.

\begin{definition}\label{def:graph-dart} 
A \emph{graph} is a triple $(D,V,\Lambda)$, where $D$ is a set of \emph{darts},
and $V$ and $\Lambda$ are each a partition of $D$ into disjoint sets.
Moreover, all sets in $\Lambda$ have size one or two.
\end{definition}

With this definition, the {\em vertices} of a graph $(D,V,\Lambda)$ are the
sets of $V$ (note that empty sets correspond to isolated vertices, and since
we are interested in covers of connected graphs by connected
ones, we assume that all sets of $V$ are nonempty).  The sets of $\Lambda$
are referred to as {\em links}, and they are of three types -- loops
(2-element sets with both darts from the same set of $V$), (ordinary) edges
(2-element sets intersecting two different sets of $V$), and semi-edges
(1-element sets).  After this explanation it should be clear that this
definition is equivalent to a definition of multigraphs which is standard in
the graph theory community:

\begin{definition}\label{def:graph-old} 
A {\em graph} is an ordered triple $(V,\Lambda,\iota)$, for $\Lambda=E\cup L\cup S$, where $\iota$ is the {\em incidence mapping} $\iota:\Lambda\longrightarrow V\cup{V\choose{2}}$ such that $\iota(e)\in V$ for all $e\in L\cup S$ and $\iota(e)\in {V\choose{2}}$ for all $s\in E$. 
\end{definition}

\begin{figure}
\centering
\includegraphics[width=0.9\textwidth]{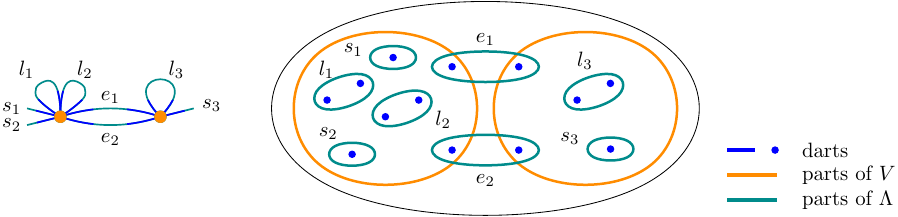}
\caption{An example of a graph presented in a usual graph-theoretical way (left) and using the  dart-based
Definition~\ref{def:graph-dart} (right).}
\label{fig:different-graph-definitions}
\end{figure}

For a comparison of Definitions~\ref{def:graph-dart} and~\ref{def:graph-old}, see Figure~\ref{fig:different-graph-definitions}. 
Since we consider multiple edges of the same type incident with the same vertex (or with the same pair of vertices), the edges are given by their names and the incidence mapping $\iota$ expresses which vertex (or vertices) `belong' to a particular edge. The degree of a vertex is then defined as follows.

\begin{definition}\label{def:degree}
For a graph $G=(V,\Lambda=E\cup L\cup S,\iota)$, the degree of a vertex $u\in V$ is defined as
$$\mbox{deg}_G(u)=p_{S}(u)+p_{E}(u)+2p_L(u),$$
where $p_{S}(u)$ (\,$p_{L}(u)$) is the number of semi-edges $e\in S$
(of loops $e\in L$) such that $\iota(e)=u$, 
and $p_{E}(u)$ is the number of ordinary edges $e\in E$ such that $u\in \iota(e)$.
\end{definition}

We call a graph $G$ {\em simple} if $p_{S}(u)=p_L(u)=0$ for every vertex $u\in V(G)$ (the graph has no loops or semi-edges) and $\iota(e)\neq\iota(e')$ for every two distinct $e,e'\in E$ (the graph has no multiple (ordinary) edges). We call $G$ {\em semi-simple} if $p_{S}(u)\le 1$ and $p_L(u)=0$ for every vertex $u\in V(G)$ and $\iota(e)\neq\iota(e')$ for every two distinct $e,e'\in E$.

Note that in the language of Definition~\ref{def:graph-dart}, the degree of a vertex $v\in V$ is simply $|v|$. We say that a graph is \emph{regular} if all its vertices have the same degree and we say it is \emph{$k$-regular} if all its vertices have the same degree $k$.

In this language, the main object of our study, a \emph{graph cover} (or equivalently a \emph{covering projection}), is defined as follows.

\begin{definition}\label{def:graph-dart-cover}
We say that a graph $G=(D_G,V_G,\Lambda_G)$ \emph{covers} a connected graph $H=(D_H,V_H,\Lambda_H)$ (denoted as $G\longrightarrow H$) if there exists a map 
$f\colon D_G\to D_H$ such that:
\begin{itemize}
\item For every $u\in V_G$, there is a $u'\in V_H$ such that the restriction of $f$ onto $u$ is a bijection between $u$ and $u'$.
\item For every $e\in \Lambda_G$, there is an $e'\in \Lambda_H$ such that $f(e)=e'$.
\end{itemize}
The map $f$ is called \emph{graph cover} (or \emph{covering projection}).
\end{definition}

One must appreciate how compact and elegant this definition is after translating it into the language of Definition~\ref{def:graph-old} in Proposition~\ref{prop:covering},
which otherwise is the definition of (multi)graph covering in the standard language of Definition~\ref{def:graph-old}. 

The fact that a loop contributes 2 to the degree of its vertex may seem strange at first sight, but becomes natural when graphs are considered embedded to surfaces, and is absolutely obvious when we look at the definition of a covering projection (for the sake of exactness, the definition is somewhat technical, we promise to be less formal in the rest of the paper).

\begin{proposition}\label{prop:covering}
A graph $G$ covers a connected graph $H$ if and only if $G$ allows 
a pair of mappings $f_V:V(G)\longrightarrow V(H)$ and $f_\Lambda:\Lambda(G)\longrightarrow \Lambda(H)$ such that
\begin{enumerate}
\item $f_\Lambda(e)\in L(H)$ for every $e\in L(G)$
 and $f_\Lambda(e)\in S(H)$ for every $e\in S(G)$,
\item $\iota(f_\Lambda(e))=f_V(\iota(e))$ for every $e\in L(G)\cup S(G)$,
\item for every link $e\in \Lambda(G)$ such that $f_\Lambda(e)\in S(H)\cup L(H)$ and $\iota(e)=\{u,v\}$, we have $\iota(f_\Lambda(e))=f_V(u)=f_V(v)$,
\item for every link $e\in \Lambda(G)$ such that $f_\Lambda(e)\in E(H)$ and $\iota(e)=\{u,v\}$ (note that it must be $f_V(u)\neq f_V(v)$), we have $\iota(f_\Lambda(e))=\{f_V(u),f_V(v)\}$,
\item for every loop $e\in L(H)$, $f^{-1}(e)$ is a disjoint union of loops and cycles spanning all vertices $u\in V(G)$ such that $f_V(u)=\iota(e)$,
\item for every semi-edge $e\in S(H)$,  $f^{-1}(e)$ is a disjoint union of edges and semi-edges spanning all vertices $u\in V(G)$ such that $f_V(u)=\iota(e)$, and
\item for every edge $e\in E(H)$,  $f^{-1}(e)$ is a disjoint union of edges (i.e., a matching) spanning all vertices $u\in V(G)$ such that $f_V(u)\in\iota(e)$.
\end{enumerate}
\end{proposition}  

\begin{figure}
\begin{center}
\scalebox{0.9}{\includegraphics[page=1]{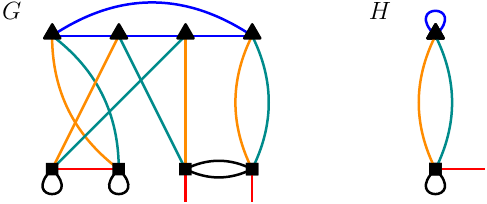}}
\end{center}
\caption{An example of a covering. The vertex mapping of the covering from $G$ to $H$ is determined by the shape of the vertices, the edge mapping by the colors of the edges.}\label{fig:examplecover}
\end{figure}

See an example of a covering projection in Figure~\ref{fig:examplecover}. 
Conditions 1--4 express the fact that $f_V$ and $f_E$ commute with $\iota$, i.e., that $f$ is a homomorphism from $G$ to $H$. 
Conditions 5--7 express that this homomorphism is locally bijective:
\begin{itemize}
  \item for every ordinary edge $e$ incident with $f_V(u)$ in $H$, there is exactly one ordinary edge of $G$ which is incident with $u$ and mapped to $e$ by $f_E$;
  \item for every semi-edge $e$ incident to $f_V(u)$ in $H$, there is exactly one semi-edge, or exactly one ordinary edge (but not both) in $G$ incident with $u$ and mapped to $e$ by $f_E$;
  \item and for every loop $e$ incident with $f_V(u)$ in $H$, there is exactly one loop or exactly two ordinary edges (but not both) of $G$ which are incident with $u$ and mapped to $e$ by $f_E$.
\end{itemize}

Even though the aforementioned definitions of graphs and graph  covers through darts are compact and elegant, in the rest of the paper we shall work with the
standard definition of graphs and the equivalent description of graph covers given by Proposition~\ref{prop:covering}, 
because they are better suited for describing our reductions and understanding the illustrative figures.

It is clear that a covering projection (more precisely, its vertex mapping) preserves degrees. One may ask when (or if) a degree preserving vertex mapping can be extended to a covering projection. An obvious necessary condition is described by the following definition.

\begin{definition} A vertex mapping $f_V:V(G)\longrightarrow V(H)$ between graphs $G$ and $H$ is called {\em degree-obedient} if 
\begin{enumerate}
\item for any two distinct vertices $u,v\in V(H)$ and any vertex $x\in f_V^{-1}(u)$, the number of ordinary edges $e$ of $H$ such that $\iota(e)=\{u,v\}$ equals the number of ordinary edges of $G$ with one end-vertex $x$ and the other one in $f_V^{-1}(v)$, and
\item for every vertex $u\in V(H)$ and any vertex $x\in f_V^{-1}(u)$, the value $p_{S(H)}(u)+2p_{L(H)}(u)$ equals 
 $p_{S(G)}(x)+2p_{L(G)}(x)+r$, where $r$ is the number of edges of $G$ with one end-vertex $x$ and the other one from $f^{-1}_V(u)\setminus\{x\}$,
\item  for every vertex $u\in V(H)$ and any vertex $x\in f^{-1}_V(u)$, $p_{S(G)}(x)\le p_{S(H)}(u)$. 
\end{enumerate}    
\end{definition} 

Finally, let us recall that 
the \emph{product} $G \times H$ of simple graphs $G$ and $H$ is defined as the graph with the vertex set being the Cartesian product $V(G) \times V(H)$ and with
vertices $(u,v)$ and $(u',v')$ being adjacent in $G \times H$ if and only if
$u$ is adjacent to $u'$ in $G$, and
$v$ is adjacent to $v'$ in $H$. We will use the following generalization in several of our constructions.

\begin{definition}
Let $G=(V,\Lambda=E\cup L\cup S,\iota)$ be a (multi)graph. Its product $G\times K_2$ with a single edge is the graph with vertex set $V_1\cup V_2$, where $V_i=\{u_i:u\in V\}, i=1,2$, and edge set $E'=\{e_{12},e_{21}:e\in E\}\cup \{\ell_1,\ell_2:\ell\in L\}\cup \{s':s\in S\}$ with the incidence function defined as follows:
\begin{itemize}
    \item if $\iota(e)=\{u,v\}$ for and ordinary edge $e\in E$ of $G$, then $\iota(e_{12})=\{u_1,v_2\}$ and $\iota(e_{21})=\{u_2,v_1\}$ (or vice versa),
    \item if $\iota(\ell)=u$ for a loop $\ell$ of $G$, then $\iota(\ell_1)=\iota(\ell_2)=\{u_1,u_2\}$, and
    \item if $\iota(s)=u$ for a semi-edge $s\in S$ of $G$, then 
    $\iota(s')=\{u_1,u_2\}$.
\end{itemize}
\end{definition}

Note that $G\times K_2$ is bipartite and has no loops and no semi-edges.

\subsection{Overview of our results}

The first major difference between graphs with and without semi-edges is that for target graphs without semi-edges, every degree-obedient vertex mapping to it can be extended to a covering. This is not true anymore when semi-edges are allowed (consider a one-vertex graph with three semi-edges, every 3-regular graph allows a degree-obedient mapping onto it, but only the 3-edge-colorable ones are covering it). 
In Section~\ref{s:semi-edges} we show that the situation is not as bad if the source graph is bipartite. In Theorem~\ref{p:dob3} we prove that if the source graph is bipartite and has no semi-edges, then every degree-obedient vertex mapping can be extended to a covering, while if semi-edges are allowed in the bipartite source graph, it can at least be decided in polynomial time if a degree-obedient mapping can be extended to a covering. 

All other results concern the complexity of the following decision problem

\computationproblem
{\sc $H$-Cover}
{A graph $G$.}
{Does $G$ cover $H$?}   

In order to present our results in the strongest possible form, we aim at proving the hardness results for restricted classes of input graphs, while the polynomial ones for the most general inputs. In particular, we only allow simple graphs as inputs when we prove NP-hardness, and on the other hand, we allow loops, multiple edges as well as semi-edges when we present polynomial-time algorithms. 

The first NP-hardness result is proven in Theorem~\ref{t:simplegraphs}, namely that covering semi-simple regular graphs of valency at least 3 is NP-hard even for simple bipartite input graphs. In Sections~\ref{s:onevertex} and~\ref{s:twovertex} we give a complete classification of the computational complexity of covering graphs with
one and two vertices. This extends the main result of~\cite{n:KPT97a} to graphs with semi-edges. Moreover, we strengthen the hardness results of~\cite{n:KPT97a} considerably by showing that all NP-hard cases of covering regular two-vertex graphs (even those without semi-edges) remain NP-hard for simple {\em bipartite} input graphs. Note that through the reduction from~\cite{n:KPT97b}, our results on the complexity of covering one- or two-vertex graphs provide characterization results on infinitely many simple graphs which contain at most two vertices of degrees greater than 2. 

All considered computational problems are clearly in the class NP, and thus we only concentrate on the NP-hardness proofs in the NP-completeness results. We restrict our attention to connected target graphs, in which case it suffices to consider only connected input graphs. In this case every cover is a $k$-fold cover for some $k$, which means that the preimage of every vertex has the same size.

\section{The impact of semi-edges}\label{s:semi-edges}

In this section we demonstrate the huge difference between covering graphs with and without semi-edges. 
First, we discuss the necessity of specifying the edge mapping in a covering projection.
In other words, we discuss when a degree mapping can always be extended to a covering,
and when this question can be decided efficiently.
The following proposition follows straightforwardly from the definitions.

\begin{proposition}\label{p:dob1}
For every graph covering projection between two graphs, the vertex mapping induced by this projection is degree-obedient.
\end{proposition}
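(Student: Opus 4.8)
The plan is to fix a covering projection $f=(f_V,f_E):G\longrightarrow H$ and verify directly that $f_V$ satisfies each of the three requirements of degree-obedience, in each case invoking the appropriate local bijectivity condition (items 6--8 of Definition~\ref{def:covering}) together with the homomorphism conditions (items 1--5).

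For the first condition, I would fix distinct $u,v\in V(H)$ and $x\in f_V^{-1}(u)$, and exhibit a bijection between the ordinary edges $e$ of $H$ with $\iota(e)=\{u,v\}$ and the ordinary edges of $G$ incident with $x$ whose other end lies in $f_V^{-1}(v)$. In one direction, item 8 says that $f_E^{-1}(e)$ is a matching covering $x$ exactly once, and by item 5 the other end of that edge maps to $v$. In the other direction, an ordinary edge of $G$ from $x$ to some $y\in f_V^{-1}(v)$ cannot map to a semi-edge or a loop, since item 4 would then force $f_V(x)=f_V(y)$, contradicting $u\neq v$; hence by item 5 it maps to an ordinary edge of $H$ with $\iota=\{u,v\}$. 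Matching these up yields the required equality of counts.

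The second condition is the heart of the argument, and I expect the bookkeeping here to be the main obstacle. The idea is to compute in two ways the degree contribution at $x$ coming from all edges of $G$ whose $f_E$-image is a semi-edge or a loop of $H$, counting a loop twice and a semi-edge or ordinary edge once. Counting from the $H$ side: item 7 says that for each semi-edge $e$ of $H$ at $u$ the vertex $x$ is covered exactly once by $f_E^{-1}(e)$, by a semi-edge or an ordinary edge, contributing $1$; item 6 says that for each loop $e$ of $H$ at $u$ the vertex $x$ lies on exactly one loop or cycle of $f_E^{-1}(e)$, contributing $2$. Hence this quantity equals $p_{E_1(H)}(u)+2p_{L(H)}(u)$. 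Counting from the $G$ side: semi-edges at $x$ contribute $p_{E_1(G)}(x)$ (by item 2 they map to semi-edges), loops at $x$ contribute $2p_{L(G)}(x)$ (by item 1 they map to loops), and an ordinary edge at $x$ enters this count precisely when its image is a semi-edge or loop, which by items 4 and 5 happens exactly when its other end lies in $f_V^{-1}(u)\setminus\{x\}$; there are $r$ such edges. Equating the two counts gives condition 2. The care needed is in confirming that these three classes of $G$-edges at $x$ are exactly those contributing, with no double counting, and that a cycle through $x$ in the preimage of a loop uses precisely two edges at $x$.

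For the third condition I would again use item 7: for each semi-edge $e$ of $H$ at $u$, the preimage $f_E^{-1}(e)$ covers $x$ at most once by a semi-edge, so distinct semi-edges of $G$ at $x$ must map to distinct semi-edges of $H$ at $u$. This injectivity immediately gives $p_{E_1(G)}(x)\le p_{E_1(H)}(u)$, completing the verification.
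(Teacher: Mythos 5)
Your proof is correct: all three degree-obedience conditions are verified accurately, including the careful double count for condition 2 (where the only subtle points---that preimages of loops/semi-edges at vertices other than $u$ cannot touch $x$, and that a cycle in the preimage of a loop uses exactly two edge-ends at $x$---are handled properly). The paper gives no proof at all, stating only that the proposition ``follows straightforwardly from the definitions,'' and your argument is exactly that straightforward verification written out in full, so it matches the paper's intended approach.
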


\begin{proposition} \label{p:dob2}
If $H$ has no semi-edges, then for any graph $G$, any degree-obedient mapping from the vertex set of $G$ onto the vertex set of $H$ can be extended to a graph covering projection of $G$ to $H$. 
\end{proposition}

\begin{proof}
For simple graphs $G$, this is proved already in~\cite{n:KPT97a}. If multiple edges and loops are allowed, we use a similar approach. The key point is that Petersen's theorem~\cite{petersen1891} about 2-factorization of regular graphs of even valence is true for multigraphs without semi-edges as well, and the same holds true for the K\"onig-Hall theorem~\cite{k:LP86} on 1-factorization of regular bipartite multigraphs.
\end{proof}

As we will see soon, the presence of semi-edges changes the situation a lot. Even for simple graphs, degree-obedient vertex mappings to a graph with semi-edges may not extend to a graph covering projection, and the possibility of such an extension may even be NP-complete. 

\begin{observation}
Let $F(3,0)$ be the graph with one vertex and three semi-edges pending on this vertex. Then a simple graph covers $F(3,0)$ if and only if it is 3-regular and 3-edge-colorable. Testing 3-edge-colorability is well known to be NP-hard for simple graphs.  
\end{observation}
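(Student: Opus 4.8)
The plan is to unwind Definition~\ref{def:covering} for the very special target $H=F(3,0)$, writing $w$ for its unique vertex and $s_1,s_2,s_3$ for its three semi-edges, and then to observe that the combinatorial data of a covering projection onto $H$ is literally the same thing as a proper $3$-edge-colouring of the source graph. So I would establish the two implications of the equivalence separately and then invoke the NP-completeness of cubic $3$-edge-colourability.

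For the forward implication I would start from an arbitrary covering projection $f=(f_V,f_E)$ of $G$ onto $F(3,0)$. Since $H$ has only one vertex, $f_V$ is forced to be constant, sending every vertex of $G$ to $w$. Because covering projections preserve degrees (as remarked right after Definition~\ref{def:covering}), every vertex of $G$ has degree $\deg_G(u)=\deg_H(w)=3$; and since condition~1 sends loops of $G$ to loops of $H$ while $L(H)=\emptyset$, the graph $G$ carries no loops, so it is genuinely $3$-regular. The edge map $f_E$ then assigns to each edge (ordinary edge or semi-edge) of $G$ one of $s_1,s_2,s_3$, which I read as a colouring $c\colon E(G)\to\{1,2,3\}$. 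The crucial step is condition~7: for each $i$ the preimage $f^{-1}(s_i)$ covers every vertex of $G$ exactly once (either by an incident semi-edge or by one end of an incident ordinary edge). Hence at each vertex exactly one incident edge gets colour $i$ for each $i$, so the three edges at every vertex receive three distinct colours and $c$ is a proper $3$-edge-colouring.

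For the converse I would take a loopless $3$-regular graph $G$ with a proper $3$-edge-colouring $c\colon E(G)\to\{1,2,3\}$, define $f_V$ to be constant equal to $w$, and set $f_E(e)=s_{c(e)}$. Conditions~1--5 are immediate: there is nothing to verify for loops, every semi-edge of $G$ maps to a semi-edge of $H$, and every ordinary edge of $G$ maps to a semi-edge whose incidence $w$ coincides with the common image $f_V(u)=f_V(v)$. For the local-bijectivity condition~7, properness of $c$ together with $3$-regularity guarantees that each colour class $f^{-1}(s_i)$ meets every vertex exactly once and is therefore a disjoint union of ordinary edges and semi-edges spanning $V(G)$, exactly as required. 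Thus $f$ is a covering projection.

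Finally, the equivalence reduces $F(3,0)$-Cover to deciding whether an input graph is $3$-regular and properly $3$-edge-colourable; restricted to simple inputs this is precisely the $3$-edge-colourability of cubic graphs, which is the classical NP-complete problem of Holyer, yielding the stated hardness. I expect no serious obstacle here; the only point needing care is the correct reading of condition~7 when $G$ itself has semi-edges, since a semi-edge covers only its single endpoint whereas an ordinary edge covers both, and it is exactly this bookkeeping that identifies the colour classes with the (semi-edge) perfect matchings demanded by the covering.
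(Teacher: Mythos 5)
Your proposal is correct and follows essentially the same route as the paper, which states this as an unproved Observation whose intended justification is precisely your unwinding of condition~7 of Definition~\ref{def:covering}: the preimages of the three semi-edges are three disjoint (semi-edge-allowing) perfect matchings, i.e., the colour classes of a proper 3-edge-colouring, the same identification made explicit in the proof of Theorem~\ref{t-onevertex}. Your extra care about loops being excluded and about semi-edges in the input graph is exactly the right bookkeeping, and the appeal to Holyer's NP-completeness result finishes the claim as the paper intends.
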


However, if the input graph is bipartite, the situation gets much easier.   

\begin{theorem}\label{p:dob3} 
If a graph $G$ is bipartite, then for any graph $H$, it can be decided in polynomial time whether a degree-obedient mapping from the vertex set of $G$ onto the vertex set of $H$ can be extended to a graph covering projection of $G$ to $H$. In particular, if $G$ has no semi-edges and is bipartite, then every degree-obedient mapping from the vertex set of $G$ onto the vertex set of $H$ can be extended to a graph covering projection of $G$ to $H$.
\end{theorem}

\begin{proof}
Let $G$ be a bipartite graph and let $f_V:V(G)\longrightarrow V(H)$ be a degree-obedient mapping from the vertex set of $G$ to a vertex set of $H$. We seek an edge mapping $f_E:E(G)\longrightarrow E(H)$ such that $(f_V,f_E)$ is a covering projection of $G$ to $H$. For every edge or semi-edge $s$ of $G$, its image under $f_E$ is restricted to be chosen from edges with corresponding end-vertices: if $s$ is a semi-edge on vertex $u$, $f_E(s)$ must be a semi-edge on $f_V(u)$, and if $s$ is an edge with end-vertices $u$ and $v$ (a loop, when $u=v$), $f_E(s)$ must be an edge with end-vertices $f_V(u)$ and $f_V(v)$ (a loop or a semi-edge, if $f_V(u)=f_V(v)$\,). 

Consider two distinct vertices $x\neq y\in V(H)$, and let they be connected by $k$ edges $e_1,e_2,\ldots, e_k$ in $H$. The  bipartite subgraph $\widetilde{G_{x,y}}$ of $G$ with  classes of bipartition $f_V^{-1}(x)$ and  $f_V^{-1}(y)$ and edges of $G$ with end-points in different classes is $k$-regular. By the K\"onig-Hall theorem, it is $k$-edge colorable. If $\varphi: E(\widetilde{G_{x,y}})\longrightarrow \{1,2,\ldots,k\}$ is such a coloring, then $f_E: E(\widetilde{G_{x,y}})\longrightarrow \{e_1,e_2,\ldots,e_k\}$ defined by $f_E(h)=e_{\varphi(h)}$ is a covering projection onto the set of parallel edges between $x$ and $y$ in $H$.

The situation is more complex for loops and semi-edges of $H$. Consider a vertex $x\in V(H)$ and the subgraph $\widetilde{G_x}$ of $G$ induced by $f_V^{-1}(x)$. If $x$ has $b$ semi-edges and $c$ loops in $H$, $\widetilde{G_x}$ is $(b+2c)$-regular. Let $s(u)$ be the number of semi-edges of $G$ incident with $u$, and set $g(u)=b-s(u)$. In a covering projection, for every $u\in f_V^{-1}(x)$, exactly $g(u)$ of edges incident with $u$ must map onto semi-edges of $H$ incident with $x$. Hence a covering projection on the edges of $\widetilde{G_x}$ exists only if $\widetilde{G_x}$ has a $g$-factor for the above defined function $g$. This can be decided in polynomial time (e.g., by network flow algorithms, since $\widetilde{G_x}$ is a bipartite graph, but even for general graphs the existence of a $g$-factor can be reduced to the maximum matching problem, cf.~\cite{EdmondsJ01}). If such a $g$-factor exists,  it is $b$-edge-colorable (here and only here we use the assumption that $G$ is bipartite), and such an edge-coloring defines a mapping $f_E$ from the edges of the $g$-factor onto the semi-edges of $H$ incident with $x$. For every vertex $u\in f_V^{-1}(x)$, $g(u)$ edges of $G$ incident with $u$ are mapped onto $g(u)$ distinct semi-edges incident with $x$ in $H$, and $b-g(u)=s(u)$ semi-edges remain available as images of the $s(u)$ semi-edges incident with $u$ in $G$. What remains is to define $f_E$ for the so far unmapped edges of $\widetilde{G_x}$. But these form a $2c$-regular graph which covers $c$ loops on $x$ in $H$ (a consequence of Petersen's theorem, or the K\"onig-Hall theorem since $G$ is bipartite and hence the edges of a $2c$-regular bipartite graph can be partitioned into $2c$ perfect matchings and these matchings can be paired into $c$ disjoint spanning cycles, each covering one loop).

If $\widetilde{G_x}$ has no semi-edges, then it is bipartite $(b+2c)$-regular and as such it always has a $b$-factor. Hence for a bipartite graph without semi-edges, a degree-obedient vertex mapping can always be extended to a graph covering projection.
\end{proof}

Now we prove the first general hardness result, namely that covering semi-simple regular graphs is always NP-complete
(this is the case when every vertex of the target graph is incident with at most one semi-edge, and the graph has no multiple edges nor loops). 
See Figure~\ref{fig:examples-semi-simple} for examples of semi-simple graphs $H$ defining such hard cases.

\begin{figure}[ht]
\begin{center}
\scalebox{1.0}{\includegraphics{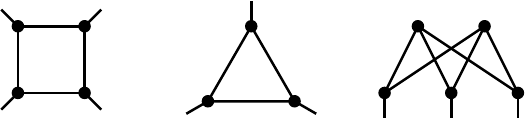}}
\end{center}
\caption{Examples of small semi-simple graphs which define NP-complete covering problems.}\label{fig:examples-semi-simple}
\end{figure}

\begin{theorem}\label{t:simplegraphs}
Let $H$ be a non-empty semi-simple $k$-regular graph, with $k\ge 3$. Then the {\sc $H$-Cover} problem is NP-complete even for simple bipartite input graphs. 
\end{theorem}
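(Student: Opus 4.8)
The plan is to dispatch NP-membership routinely and then give a polynomial reduction for hardness. Membership in NP is immediate: a nondeterministic algorithm guesses both mappings $f_V$ and $f_E$ and checks conditions 1--8 of Definition~\ref{def:covering} locally in polynomial time. So I focus on the reduction, and first I would reformulate what a cover onto a semi-simple $k$-regular $H$ means when the input $G$ is \emph{simple and bipartite}, since this exposes the mechanism to exploit. Because $G$ has no semi-edges and no loops, conditions 6--8 say that an ordinary edge $xy$ of $H$ lifts to a perfect matching between $f_V^{-1}(x)$ and $f_V^{-1}(y)$, whereas a semi-edge incident to $x$ in $H$ necessarily lifts to a \emph{perfect matching inside} $f_V^{-1}(x)$ (a semi-edge of $G$ being impossible). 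Thus a cover is exactly a partition of $E(G)$ into matchings indexed by $E(H)$: a proper edge-colouring of $G$ by the edges/semi-edges of $H$ in which the palette at any $u\in f_V^{-1}(x)$ is the set of edges and semi-edges incident to $x$, and the colour of $uv$ must be an $H$-edge joining $f_V(u)$ and $f_V(v)$. The semi-edge colour classes, being matchings confined to one side of the bipartition, are the device that lets me encode hard constraints; this is the same phenomenon that already makes covering $F(3,0)$ equivalent to $3$-edge-colourability (the Observation above).

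Next I would build the reduction from a suitable NP-hard colouring problem---$3$-edge-colourability of cubic graphs, or more conveniently a monotone/NAE constraint variant whose instances are naturally bipartite---so that the colour choice corresponds to a choice of the forced matchings above. For each vertex $x$ of $H$ I allocate a pool of copies in $G$; for each constraint of the source instance I insert a gadget whose admissible matching decompositions are in bijection with the satisfying local assignments. I would realize the gadgets using even paths and even cycles together with a fixed global two-colouring, so that all of $G$ is bipartite, and would build them without duplicating the incidence of any two edges, so that $G$ stays simple. The palettes can always be forced because $\deg_H=k\ge3$: there is enough room to attach auxiliary ``padding'' edges bringing every gadget vertex up to degree $k$ while preserving both simplicity and bipartiteness.

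Two cases arise. If $H$ carries at least one semi-edge, I use that semi-edge as the constraint-encoding matching, as sketched. If $H$ has no semi-edge at all, then $H$ is a simple $k$-regular graph with $k\ge3$ and $H$-Cover is already known to be NP-hard \cite{n:KPT97,n:Fiala00b}; the remaining task is to \emph{strengthen} that reduction so that the constructed inputs become simple and bipartite, which I would handle with the same even-length gadgetry (and, where needed, by splitting each gadget vertex into its two shades in a consistent two-colouring). In either case I would verify correctness by checking that a valid colouring of the source instance yields matchings satisfying conditions 6--8, and conversely. Here Theorem~\ref{p:dob3} is convenient: for the bipartite semi-edgeless pieces of the construction every degree-obedient partial assignment already extends, so I only have to argue about the semi-edge colour classes.

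The main obstacle I expect is carrying this out \emph{uniformly} for an arbitrary semi-simple $k$-regular target while keeping the input simultaneously simple and bipartite. Bipartiteness imposes parity and length constraints on every gadget, and these must be reconciled with the matchings forced by the semi-edges, which live entirely on one side of the bipartition; ensuring that the padding needed to reach the full $k$-regular palette never creates a multi-edge, a short odd cycle, or an unintended extra covering is the delicate point. Producing a single gadget template that works for every $k\ge3$ and for every admissible pattern of semi-edges in $H$---rather than a separate case analysis per target---is where I expect most of the effort to go.
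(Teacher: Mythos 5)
Your proposal is not a proof but a program for one: the gadgets are never constructed, the correctness arguments are never given, and you yourself flag the crucial difficulties (a uniform gadget template for every $k\ge 3$ and every pattern of semi-edges, padding that avoids multi-edges and odd cycles, exclusion of unintended coverings) as open. In a reduction-by-gadgets proof those points \emph{are} the proof, so as it stands there is a genuine gap. The gap is compounded in your semi-edge-free case: there you lean on the known hardness of covering simple $k$-regular graphs \cite{n:KPT97,n:Fiala00b} and then assert it can be ``strengthened'' to simple \emph{bipartite} inputs by even-length gadgetry. That strengthening is itself a nontrivial theorem (the cited reductions do not produce bipartite instances), and it is exactly the kind of statement that needs a proof, not a sketch; note also that restricting inputs to bipartite graphs can genuinely change the complexity in this area (covering $F(3,0)$ is NP-complete in general but polynomial on bipartite inputs by Theorem~\ref{p:dob3}), so bipartite hardness cannot be taken for granted.

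The paper avoids all of this with one observation you missed: take the bipartite double cover $H'=H\times K_2$. Since $H$ is semi-simple, $H'$ is a \emph{simple}, $k$-regular, bipartite graph (each semi-edge at $x$ becomes the ordinary edge joining the two copies of $x$), so {\sc $H'$-Cover} is NP-complete by \cite{n:KPT97}. Moreover, a $k$-regular graph $G$ covers $H'$ if and only if $G$ is bipartite and covers $H$: one direction composes $G\to H'$ with the projection $H'\to H$, the other lifts a covering $G\to H$ of a bipartite $G$ to $H'$ by using the bipartition classes of $G$ as the two shades. Since non-bipartite instances of {\sc $H'$-Cover} are trivial no-instances, the identity map on bipartite instances is a polynomial reduction from {\sc $H'$-Cover} to {\sc $H$-Cover} on simple bipartite inputs, and the theorem follows in three lines. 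In other words, the product construction simultaneously delivers the uniformity over all semi-simple $H$, the simplicity and bipartiteness of the hard instances, and the semi-edge-free case, which are precisely the obstacles your plan leaves unresolved.
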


\begin{proof}
Consider $H'=H\times K_2$. This graph is simple, $k$-regular and bipartite, hence the {\sc $H'$-Cover} problem is NP-complete by~\cite{n:KPT97}. Given an input $k$-regular graph $G$, it is easy to see that $G$ covers $H'$ if and only it is bipartite and covers $H$. Since bipartiteness can be checked in polynomial time, the claim follows.
\end{proof}

\section{One-vertex target graphs}\label{s:onevertex}

We start the section by proving a slightly more general hardness result, which may be of interest on its own. In particular, it implies that for every $d\ge 3$, it is NP-complete to decide if a simple $d$-regular graph contains an even 2-factor, i.e., a spanning 2-regular subgraph whose every cycle has even length. It is worth noting that this is the same as checking if those graphs have oddness equal to zero. The \emph{oddness} (defined in literature often just for bridgeless cubic simple graphs) is the minimum number of odd cycles in a 2-factor of a given graph. We refer to~\cite{haggkvist2005double,huck2001cycle,mavcajova2014sparsely,mazzuoccolo2017nowhere} for further information regarding this parameter.

\begin{theorem}\label{t-onevertex-general}
For every $k\ge 2$ and every $d\ge k+1$, it is NP-complete to decide if a  simple $d$-regular graph contains $k$ pairwise disjoint perfect matchings.
\end{theorem}

\begin{proof}
The complement of the union of $k$ pairwise disjoint perfect matchings in a $(k+1)$-regular graph is a perfect matching as well, and thus a $(k+1)$-regular graph contains $k$ pairwise disjoint perfect matchings if and only if it is $(k+1)$-edge colorable. Hence for $d=k+1$, the claim follows from the NP-completeness of $d$-edge colorability of $d$-regular graphs which has been proven by Leven and Galil~\cite{n:LevenGalil83}.

Let $d\ge k+2$. We prove the claim by a reduction from $(k+1)$-edge colorability of $(k+1)$-regular graphs (using~\cite{n:LevenGalil83} again). Fix a graph $H$ with one vertex, say $x$, of degree $d-2$ and all other vertices having degrees $d$, and such that $H$ contains $d-2$ pairwise disjoint perfect matchings (such a graph can be easily constructed, see the end of the proof). 
Given a $(k+1)$-regular graph $G$ whose $(k+1)$-edge colorability is questioned,
we construct a graph $G'$ as follows: The graph $G'$ contains two disjoint copies $G_1,G_2$ of $G$
such that the two clones of each vertex $u$ of $G$ in $G_1$ and $G_2$ are connected together by $d-k-1$ paths of length $2$.
Moreover, the middle vertices in each of those paths play the role of the vertex $x$ in a copy of $H$
(each copy of $H$ is private to its path). See Figure~\ref{fig:one-vertex}. Formally, 
$$V(G')=V(G_1) \cup V(G_2) \cup \bigcup_{u\in V(G)}\bigcup_{i=1}^{d-k-1}V(H_{u,i}), \textrm{ and}$$
$$E(G')=E(G_1)\cup E(G_2) \cup \bigcup_{u\in V(G)}\bigcup_{i=1}^{d-k-1}(E(H_{u,i}) \cup \{u_1x_{u,i}, u_2x_{u,i}\},
$$
where 
$$V(G_j)=\{u_j:u\in V(G)\} \mbox{ and }E(G_j)=\{u_jv_j:uv\in E(G)\}$$ 
for $j=1,2$, and 
$$V(H_{u,i})=\{y_{u,i}:y\in V(H)\} \mbox{ and }
E(H_{u,i})=\{y_{u,i}z_{u,i}:yz\in E(H)\}$$ 
for $u\in V(G)$ and $i=1,2,\ldots,d-k-1$. 

\begin{figure}
\begin{center}
\scalebox{1.15}{\includegraphics{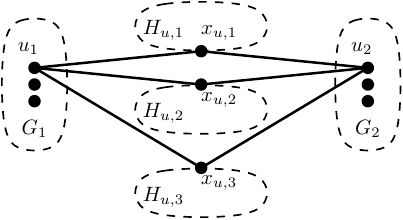}}
\end{center}
\caption{An illustration for the construction of the graph $G'$ in the proof of Theorem~\ref{t-onevertex-general}.}\label{fig:one-vertex}
\end{figure}

We claim that $G'$ has $k$ pairwise disjoint perfect matchings if and only if $\chi'(G)=k+1$ (where $\chi'(G)$ denotes the \emph{chromatic index of graph $G$}, i.e. the smallest number of colors needed to color its edges). In one direction, if $G$ is $k$-edge colorable, then for each $j=1,2$, the graph $G_j$ has $k$ pairwise disjoint perfect matchings, say  $M_h^j, h=1,2,\ldots, k$. By the assumption on $H$, each $H_{u,i}$ has $k\le d-2$ pairwise disjoint matchings, 
say $M_h^{u,i}, h=1,2,\ldots,k$, 
for all $u\in V(G)$ and $i=1,2,\ldots, d-k-1$. Then 
$$M_h=M_h^1\cup M_h^2\cup \bigcup_{u\in V(G)}\bigcup_{i=1}^{d-k-1}M_h^{u,i},$$ 
for $h=1,2,\ldots,k$,
are $k$ pairwise disjoint perfect matchings in $G'$.

For the opposite implication, note that no perfect matching of $G'$ contains any of the edges $u_jx_{u,i}$, $u\in V(G), i=1,2,\ldots,d-k-1, j=1,2$, because each $H_{u,i}$ has
an even number of vertices and each $x_{u,i}$ is an articulation in $G'$. So, for every perfect matching $M$ in $G'$, $M\cap E(G_1)$ is a perfect matching in $G$. Thus if $M_h, h=1,2,\ldots,k$ are pairwise disjoint perfect matchings in $G'$, then  $\{uv\in E(G): u_1v_1\in M_h\}, h=1,2,\ldots,k$ are $k$ pairwise disjoint perfect matchings in $G$, and hence $\chi'(G)=k+1$.

To complete the proof, let us show an explicit construction of the auxiliary graph $H$. Fix an odd number $t\ge d+1$. It is well known that the complete graph $K_{t+1}$ is $t$-edge colorable, i.e., its edge set $E(K_{t+1})$ can be partitioned into $t$ perfect matchings, say $M_1, M_2, \ldots, M_t$ (an elegant proof of this otherwise folklore result can be found in~\cite{k:Soifer08}). Choose vertices $x,y,z$ so that $xy\in M_1, xz\in M_2$, and assume without loss of generality that $yz\in M_t$. Define the graph $H$ as follows
$$V(H)=V(K_{t+1})$$
$$E(H)=(\bigcup_{i=1}^d M_i \setminus \{xy,xz\})\cup \{yz\}.$$
Then $\mbox{deg}_Hx=d-2$ and $\mbox{deg}_Hu=d$ for all $u\in V(H)\setminus \{x\}$. Moreover, $H$ has $d-2$ pairwise disjoint perfect matchings $M_3,M_4,\ldots,M_d$. 
\end{proof}

Now we are ready to prove a dichotomy theorem on the complexity of covering one-vertex graphs. Let us denote by $F(b,c)$ the one-vertex graph with $b$ semi-edges and $c$ loops.

Before proving the next theorem, let us pinpoint that matchings here can consist of both edges and semi-edges. The generalisation is straightforward, putting semi-edges into matchings can be done by a simple preprocessing, and the notion of perfectness naturally applies to such ``generalised matchings''.

\begin{theorem}\label{t-onevertex}
The {\sc $F(b,c)$-Cover} problem  is polynomial-time solvable if $b\le 1$, or $b=2$ and $c=0$, and it is NP-complete otherwise, even for simple graphs.
\end{theorem} 

\begin{proof}
In every case, the input graph $G$ has to be $(b+2c)$-regular, since otherwise it cannot cover $F(b,c)$. This condition can be checked in polynomial time. Next observe that a $(b+2c)$-regular graph $G$ covers $F(b,c)$ if and only if it contains $b$ pairwise disjoint perfect matchings whose removal leaves us with a $2c$-regular graph without semi-edges. Indeed, these matchings are the preimages of the $b$ semi-edges in a covering projection. The remaining $2c$-regular graph without semi-edges can be always partitioned into $c$ pairwise disjoint $2$-factors by the well known Petersen's theorem~\cite{k:LP86,petersen1891}, and each of the $2$-factors will cover one of the $c$ loops of $F(b,c)$. Note that a possible presence of loops in the input graph does not cause any problems.

The polynomially solvable cases then follow easily:
\begin{itemize}
  \item If $b=0$, the checking is trivial.
  \item If $b=1$, the existence of a perfect matching can be checked in polynomial time, for instance by a simple adaptation of Edmonds' blossom algorithm~\cite{edmonds1965paths}. The only thing we have to be careful about is that any possible semi-edges in the input graph have to be added to the perfect matching. This can be done by a simple preprocessing.
  \item If $b=2$ and $c=0$, $G$ itself has to be a $2$-regular graph and hence it contains two disjoint perfect matchings if and only if it contains at least one perfect matching, i.e., when all connected components of $G$ are even.
\end{itemize}
The NP-complete cases follow from Theorem~\ref{t-onevertex-general} by setting $k=b$ and $d=b+2c$.
\end{proof}

\section{Two-vertex target graphs}\label{s:twovertex}

Let $W(k,m,\ell,p,q)$ be the two-vertex graph with $k$ semi-edges and $m$ loops at one vertex, $p$ loops and $q$ semi-edges at the other one, and $\ell>0$ multiple edges connecting the two vertices (these edges are referred to as {\em bars}). In other words, $W(k,m,\ell,p,q)$ is obtained from the disjoint union of $F(k,m)$ and $F(q,p)$ by connecting their vertices by $\ell$ parallel edges.
For an example see the graph $H$ from Figure~\ref{fig:examplecover} which is isomorphic to both $W(1,1,2,1,0)$ and $W(0,1,2,1,1)$.

\begin{theorem}\label{t:2-vertex}
The {\sc $W(k,m,\ell,p,q)$-Cover} problem is solvable in polynomial time in the following cases
\begin{enumerate}
\item $k+2m\neq 2p+q$ and ($k\le 1$ or $k=2$ and $m=0$) and ($q\le 1$ or $q=2$ and $p=0$),
\item $k+2m = 2p+q$ and $\ell =1$ and $k=q\le 1$ and $m=p=0$,
\item $k+2m = 2p+q$ and $\ell >1$ and $k=m=p=q=0$,
\end{enumerate} 
and it is NP-complete otherwise.
\end{theorem}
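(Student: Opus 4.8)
The plan is to split on the degrees of the two vertices of the target. Writing $u,v$ for its vertices, we have $\deg_W(u)=k+2m+\ell$ and $\deg_W(v)=2p+q+\ell$, so the two degrees coincide exactly when $k+2m=2p+q$; this is the dividing line between condition~1 and conditions~2--3. The organizing principle throughout is a decoupling observation: once a degree-obedient vertex map $f_V\colon V(G)\to\{u,v\}$ is fixed (and $G$ is connected), the edge set of $G$ splits into the cross edges, the edges and semi-edges inside $f_V^{-1}(u)$, and those inside $f_V^{-1}(v)$, and these three parts can be realized independently. By degree-obedience the cross edges form an $\ell$-regular bipartite subgraph, which is always $\ell$-edge-colorable by K\"onig--Hall and hence never an obstruction; what remains is, independently, that the subgraph induced on $f_V^{-1}(u)$ (with its semi-edges) covers $F(k,m)$ and the one induced on $f_V^{-1}(v)$ covers $F(q,p)$. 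Thus the one-vertex dichotomy of Theorem~\ref{t-onevertex} governs each side. This is exactly the two-target-vertex specialization of the analysis in the proof of Theorem~\ref{p:dob3}.

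When $k+2m\neq 2p+q$ (condition~1 and its hard complement) the two target degrees differ, so for connected $G$ the map $f_V$ is forced: a vertex must go to $u$ or to $v$ according to its degree, and $G$ is rejected if some degree is wrong or if degree-obedience (exactly $\ell$ cross edges per vertex) fails, all checkable in polynomial time. By the decoupling, $G$ then covers $W$ iff the $u$-side covers $F(k,m)$ and the $v$-side covers $F(q,p)$; hence by Theorem~\ref{t-onevertex} the problem is polynomial precisely when both sides are polynomial, i.e. when ($k\le1$ or $k=2,m=0$) and ($q\le1$ or $q=2,p=0$), matching condition~1. If instead one side is hard, I reduce from that side's $F$-cover: take a hard simple instance $G_0$ for, say, $F(k,m)$, use it as the $u$-side, attach a fixed coverable $(2p+q)$-regular gadget of the same order as the $v$-side, and join the two sides by a simple $\ell$-regular bipartite graph; by decoupling the result covers $W$ iff $G_0$ covers $F(k,m)$.

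For equal degrees $d:=k+2m=2p+q$ the vertex map is no longer pinned down by degrees, and this is the core of the theorem. Two polynomial islands remain. If $d=0$ (so $k=m=p=q=0$), then $W(0,0,\ell,0,0)$ is the $\ell$-dipole and $G$ covers it iff $G$ is $\ell$-regular and bipartite (again K\"onig--Hall), which is condition~2 (with $k=q=0$, $\ell=1$) together with condition~3 ($\ell>1$). If $d=1$ and $\ell=1$, then $W(1,0,1,0,1)$ forces $G$ to be $2$-regular, a disjoint union of cycles; by the decoupling $G$ covers it iff it has a perfect matching $M$ (the two semi-edge lifts combined) such that contracting $M$ yields a bipartite graph (the single bar), a condition decidable in polynomial time component by component. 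This is condition~2 with $k=q=1$.

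The hard part is proving NP-completeness for all remaining equal-degree targets, namely $d\ge2$ (for every $\ell$, including cases where both one-vertex sides $F(k,m)$ and $F(q,p)$ are individually easy) and $d=1,\ell\ge2$; here hardness genuinely stems from the freedom in choosing $f_V$, not merely from one hard side. My strategy is to reduce from the NP-hard problems established in Section~\ref{s:onevertex}: disjoint perfect matchings, even $2$-factor, and $d$-edge-colorability of $d$-regular graphs. As a prototype, $W(1,0,\ell,0,1)$-cover asks exactly for a perfect matching $M$ of an $(\ell+1)$-regular $G$ whose contraction is bipartite; for $\ell=2$ this is precisely the even $2$-factor problem shown NP-complete right after Theorem~\ref{t-onevertex-general}, and larger $\ell$ and $d$ are handled by degree-raising and loop/semi-edge-simulating gadgets. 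The two main difficulties I anticipate are (a) forcing, or otherwise controlling, the unforced bipartition $f_V$ through local gadget structure, so that every cover must encode a solution of the source instance, and (b) carrying this out while keeping the constructed inputs simple and bipartite, which I expect to secure by passing to bipartite double covers / categorical products with $K_2$ as in the proof of Theorem~\ref{t:simplegraphs}. A routine but careful case analysis over the admissible profiles $(k,m)$ and $(q,p)$ with $k+2m=2p+q$ then completes the classification.
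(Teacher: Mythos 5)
Your decoupling principle, your handling of all three polynomial cases, and your reduction for the non-regular hard cases are correct; in the last of these, your construction (a hard one-vertex instance on one side, a fixed coverable regular bipartite graph of the same order on the other, joined by a simple $\ell$-regular bipartite graph) is even simpler than the paper's gadget $G_{a,\ell,b}$ and works for the same reason, namely that the degree gap forces $f_V$. The genuine gap is the regular case $k+2m=2p+q$, which is the main technical content of the theorem, and there your plan fails at its first concrete step. Covering $W(1,0,2,0,1)$ by a simple cubic $G$ is, as you correctly say, equivalent to finding a perfect matching $M$ whose contraction leaves a bipartite graph (a $(1,2)$-coloring in the paper's terminology); but this is \emph{not} the even-$2$-factor problem. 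The triangular prism (two triangles joined by a perfect matching) has two disjoint perfect matchings, hence an even $2$-factor, yet each of its four perfect matchings contracts to a triangle, so it admits no $(1,2)$-coloring and does not cover $W(1,0,2,0,1)$. Thus your prototype equivalence is false. Moreover, no reduction whose source is the problems of Section~\ref{s:onevertex} can be the engine here: in hard regular cases such as $W(0,1,1,1,0)$ or $W(1,0,2,0,1)$, both one-vertex sides $F(k,m)$ and $F(q,p)$ are polynomial-time solvable, a tension you acknowledge but never resolve.

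Your fallback device for bipartiteness also fails. Covering projections compose, so if $G$ covers $W$ then so does $G\times K_2$, but the converse is false, hence the map $G\mapsto G\times K_2$ does not preserve NO-instances: the bipartite double cover of the prism (a hexagonal prism) \emph{does} admit a $(1,2)$-coloring (properly $2$-color each hexagon so that matched vertices agree) and therefore covers $W(1,0,2,0,1)$, although the prism does not. (In Theorem~\ref{t:simplegraphs} the product with $K_2$ is applied to the \emph{target}, not to the input, which is why it works there.) What the paper actually does for regular targets is qualitatively different: using Theorem~\ref{p:dob3} it shows (Observation~\ref{t:ab-coloring-bipartite-cover}) that for bipartite semi-edge-free inputs, covering a regular $W(k,m,\ell,p,q)$ is equivalent to the existence of a $(k+2m,\ell)$-coloring, and it then proves Theorem~\ref{t:ab-coloring} --- NP-completeness of {\sc $(b,c)$-Coloring} on simple bipartite graphs for all $b,c\ge 1$ with $b+c\ge 3$ --- through a long chain of new reductions from {\sc NAE-3-SAT} and from ($k$-in-$2k$)-SAT$_q$, with purpose-built gadgets (the $20$-vertex enforcer $H_1$ and chains thereof, garbage-collection graphs, bridge graphs for the steps from $(2,1)$ to $(b,1)$ and on to $(b,c)$, and a separate construction for $(b,b)$). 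None of this, nor any workable substitute, appears in your proposal; the phrase ``degree-raising and loop/semi-edge-simulating gadgets'' followed by ``a routine but careful case analysis'' is exactly where the entire difficulty of the theorem lives.
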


Note that Case 1 applies to non-regular target graph $W$, while Cases 2 and 3 apply to regular graphs $W$, i.e., they cover all cases when $k+2m+\ell=2p+q+\ell$. 

We will refer to the vertex with $k$ semi-edges as {\em blue} and the vertex with $q$ semi-edges as {\em red}. In a covering projection $f=(f_V,f_E)$ from a graph $G$ onto $W(k,m,\ell,p,q)$, we view the restricted vertex mapping $f_V$ as a coloring of $V(G)$. We call a vertex $u\in V(G)$ blue (red) if $f_V$ maps $u$ onto the blue (red, respectively) vertex of $W(k,m,\ell,p,q)$. 
In order to keep the text clear and understandable, we divide the proof into a sequence of claims in separate subsections.
This will also allow us to state several hardness results in a stronger form.

\subsection{Polynomial parts of Theorem~\ref{t:2-vertex}}

We follow the case-distinction from the statement of Theorem~\ref{t:2-vertex}:
\begin{enumerate}
\item If $k+2m\neq 2p+q$, then the two vertex degrees of $W(k,m,\ell,p,q)$ are different,
and the vertex restricted mapping is uniquely defined for any possible graph covering projection from the input graph $G$ to $W(k,m,\ell,p,q)$. For this coloring of $G$, if it exists, we check if it is degree-obedient. 
If not, then $G$ does not cover $W(k,m,\ell,p,q)$. If yes, we check using Theorem~\ref{t-onevertex-general} whether the blue subgraph of $G$ covers $F(k,m)$ and whether the red subgraph of $G$ covers $F(q,p)$. 
If any one of them does not, then $G$ does not cover $W(k,m,\ell,p,q)$. 
If both of them do, then $G$ covers $W(k,m,\ell,p,q)$, since the
``remaining'' subgraph of $G$ formed by edges with one end-vertex red and the other one blue is $\ell$-regular and bipartite,
thus covering the $\ell$ parallel edges of $W(k,m,\ell,p,q)$ (Proposition~\ref{p:dob2}).
\item In Case~2, the input graph $G$ covers $W(1,0,1,0,1)$ only if $G$ is 2-regular.
If this holds, then $G$ is a disjoint union of cycles and paths with one semi-edge attached to each of its endpoints (such components are referred to as \emph{open paths}).
A cycle covers $W(1,0,1,0,1)$ if and only if its number of vertices is divisible by 4. An open path covers $W(1,0,1,0,1)$ if and only if its number of vertices is even.
For the subcase of $k=q=0$, see the next point.
\item The input graph $G$ covers $W(0,0,\ell,0,0)$ only if it is a bipartite $\ell$-regular graph without semi-edges, but in that case it does cover $W(0,0,\ell,0,0)$, as follows from Proposition~\ref{p:dob2}.  
\end{enumerate}

\subsection{NP-hardness for non-regular target graphs}

\begin{proposition}\label{t:2vertex-np-differentdegrees}
Let the parameters $k,m,p,q$ be such that $k+2m\neq 2p+q$, and
(($k\ge 3$ or $k=2$ and $m\ge 1$), or ($q\ge 3$ or $q=2$ and $p\ge 1$)). Then the {\sc $W(k,m,\ell,p,q)$-Cover} problem is NP-complete.
\end{proposition}

\begin{proof}
The parameters imply that at least one of the problems {\sc $F(k,m)$-Cover} and {\sc $F(q,p)$-Cover} is NP-complete by Section~\ref{s:onevertex}. Without loss of generality assume that this is the case of {\sc $F(q,p)$-Cover}.

Let $a := k+2m$ and $b := 2p+q$ and let $c$ be the smallest even number greater than both $a$ and $b$.  We shall construct a gadget which will be used in our reduction. We shall start with the construction for $\ell = 1$. 

We take two disjoint copies of $K_{c}$ and denote the vertices in the cliques as $x_1, \ldots, x_{c}$ and $y_1, \ldots, y_{c}$, respectively. Remove $(c-b-1)$ edge-disjoint perfect matchings, corresponding to $(c-b-1)$ color classes in some fixed $(c-1)$-edge-coloring of $K_{c}$, from the first copy of $K_{c}$, and remove $(c-a-1)$ edge-disjoint perfect matchings, corresponding to $(c-a-1)$ color classes in some fixed $(c-1)$-edge-coloring of $K_{c}$, from the second one. Add two new vertices $v,w$ and connect them by edges $vx_1$ and $wy_1$. Furthermore, add edges $x_iy_i$ for all $2 \leq i \leq c$. 

We denote the resulting graph by $G_{a,b}$. See Figure~\ref{fig:diffdegrees1} for an example. 

\begin{figure}[h]
\centering
\includegraphics[scale=0.9]{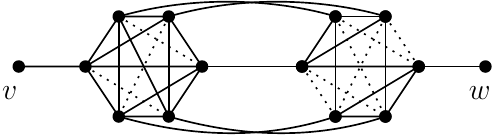}
\caption{A gadget $G_{3,4}$ from Proposition~\ref{t:2vertex-np-differentdegrees}.}
\label{fig:diffdegrees1}
\end{figure}

If $\ell > 1$, take $\ell$ disjoint copies of $G_{a,b}$ and denote their $v$-vertices as $v_1,\ldots, v_\ell$ and their $w$-vertices as $w_1,\ldots, w_\ell$. Furthermore, denote the corresponding vertices in the $j$-th copy ($1 \leq j \leq \ell$) of $G_{a,b}$ as $x_{j,1}, \ldots, x_{j,c}$ and $y_{j,1}, \ldots, y_{j,c}$. 

Insert edges between vertices $v_1,\ldots, v_{\ell}$ and $x_{1,1}, \ldots, x_{\ell,1}$ so that they induce a complete bipartite graph with one part being $v_1,\ldots, v_{\ell}$ and the other part being $x_{1,1}, \ldots, x_{\ell,1}$. The analogous construction will be done for $w_1,\ldots, w_{\ell}$ and $y_{1,1}, \ldots, y_{\ell,1}$. 
Moreover, for each $i \in \{2,\ldots,c \}$, insert edges between $x_{1,i}, \ldots, x_{\ell,i}$ and $y_{1,i}, \ldots, y_{\ell,i}$ so that they induce a complete bipartite graph with one part being $x_{1,i}, \ldots, x_{\ell,i}$ and the other part being $y_{1,i}, \ldots, y_{\ell,i}$. Denote the resulting graph as $G_{a,\ell,b}$ (for $\ell=1$, we set $G_{a,1,b}=G_{a,b}$). See Figure~\ref{fig:diffdegrees2} for an example.

\begin{figure}[h]
\centering
\includegraphics[scale=0.9]{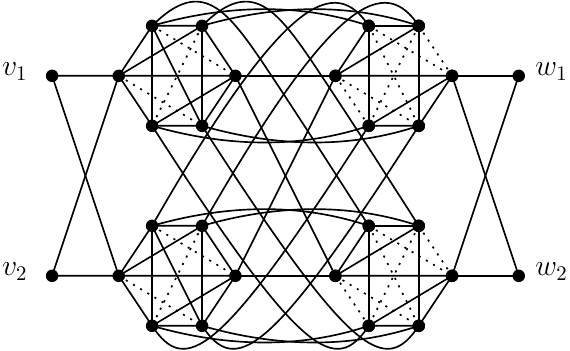}
\caption{A gadget $G_{3,2,4}$ from Theorem~\ref{t:2vertex-np-differentdegrees}.}
\label{fig:diffdegrees2}
\end{figure}

We will reduce from the problem {\sc $F(q,p)$-Cover}, which is NP-complete for these parameters by the results of the preceding section.
Let $G$ be an instance of {\sc $F(q,p)$-Cover} with $n$ vertices. Without loss of generality we may assume that $n$ is even. We shall construct a new graph $G'$ in the following way. 
Take $\ell$ copies of the graph $G$ and denote their vertices as $t_{j,1},\ldots, t_{j,n}$ in the $j$-th copy, respectively. Take $\ell$ copies of a graph with $n$ vertices that covers $F(k,m)$ (any $a$-regular bipartite graph on $n$ vertices will do) and denote their vertices as $u_{j,1},\ldots, u_{j,n}$ in the $j$-th copy, respectively.
  For each $h$, $1 \leq h \leq n$, take a new extra copy of $G_{a,\ell,b}$, denote their $v$ and $w$ vertices as $v_{h,1},\ldots, v_{h,\ell},w_{h,1},\ldots, w_{h,\ell}$ in the $h$-th copy, respectively, and identify $v_{h,j}$ with $u_{j,h}$ and $w_{h,j}$ with $t_{j,h}$ for each $1 \leq j \leq \ell$ and $1\le h \le n$.
Note that the constructed graph $G'$ is linear in the size of $G$. We claim that $G'$ covers $W(k,m,\ell,p,q)$ if and only if $G$ covers $F(q,p)$.

For the `only if' direction, suppose that $G'$ covers $W(k,m,\ell,p,q)$. First of all, because of the different degrees of the vertices of $W(k,m,\ell,p,q)$, we have a clear information about the vertex mapping part of the covering projection. In particular, the $v$ and $y$ vertices of the copies of $G_{a,\ell,b}$ are mapped onto the vertex of degree $a+\ell$ in $W(k,m,\ell,p,q)$, while the $x$ and $w$ ones are mapped onto the vertex of degree $b+\ell$. Hence the edges of each copy of $G$ must map onto the loops and half-edges incident with the vertex of degree $b+\ell$ in $W(k,m,\ell,p,q)$, and hence $G$ covers $F(q,p)$.  

Regarding the backward direction, the covering projection from $G'$ onto $W(k,m,\ell,p,q)$ is constructed as follows. Map the $v$ and $y$ vertices of the copies of $G_{a,\ell,b}$  onto the vertex of degree $a+\ell$ in $W(k,m,\ell,p,q)$, and the $x$ and $w$ ones  onto the vertex of degree $b+\ell$. This is a degree obedient vertex mapping of $V(G')$ onto the vertices of $W(k,m,\ell,p,q)$. The edges of $G'$ with one end-vertex of degree $a+\ell$ and the other one of degree $b+\ell$ induce a bipartite $\ell$-regular graph, and therefore can be mapped to the $\ell$ bars of $W(k,m,\ell,p,q)$ in a locally bijective way. If we delete these edges, $G'$ falls apart into several components of connectivity. The components induced by the $x$ vertices from copies of $G_{a,\ell,b}$ are $a$-regular $a$-edge colorable subgraphs of $G_{a,\ell,b}$ and hence their edges cover $F(k,m)$. The components induced by the $y$ vertices from copies of $G_{a,\ell,b}$ are $b$-regular $b$-edge colorable subgraphs of $G_{a,\ell,b}$ and hence their edges cover $F(q,p)$. The components induced by the $v$ vertices induce copies of the $a$-regular $a$-edge colorable graph chosen in the construction of $G'$, and hence they cover $F(k,m)$. Last but not least, the components induced by the $w$ vertices are isomorphic to $G$, whose edges cover $F(q,p)$ by the hypothesis of the `if' direction of the proof. Putting all these edge mappings together we obtain a covering projection from $G'$ onto $W(k,m,\ell,p,q)$, which concludes the proof.
\end{proof} 

\subsection{NP-hardness for connected regular target graphs}

The aim of this subsection is to conclude the proof of Theorem~\ref{t:2-vertex} by showing the NP-hardness for the case of $\ell\ge 1$ and $k+2m=2p+q$. We will actually prove a result which is more general in two directions. Firstly, we formulate the result in the language of colorings of vertices, and secondly, we prove the hardness for bipartite inputs.
This might seem surprising, as we have seen in Section~\ref{s:semi-edges} that bipartite graphs can make things easier. 
Moreover, this strengthening in fact allows us to prove the result in a unified, and hence simpler, way.

Note that the following definition of a relaxation of usual proper 2-coloring resembles the so-called \emph{defective 2-coloring} (see survey of Wood~\cite{arxiv1803.07694}). However, the definitions are not equivalent.

\begin{definition}
A {\em $(b,c)$-coloring} of a graph is a 2-coloring of its vertices such that every vertex has $b$ neighbors of its own color and $c$ neighbors of the other color.
\end{definition}

\begin{observation}\label{t:ab-coloring-bipartite-cover}
For any parameters $k,m,\ell,p,q$ such that $k+2m=2p+q$, a bipartite graph $G$ with no semi-edges covers $W(k,m,\ell,p,q)$ if and only if it allows a $(k+2m,\ell)$-coloring.
\end{observation}

\begin{proof}
On one hand, any graph covering projection from $G$ to $W(k,m,\ell,p,q)$ induces a $(k+2m,\ell)$-coloring of $G$, provided $k+2m=2p+q$. On the other hand, a $(k+2m,\ell)$-coloring of $G$ is a degree-obedient vertex mapping from $G$ to $W(k,m,\ell,p,q)$, again provided that $k+2m=2p+q$. If $G$ is bipartite and has no semi-edges, then this mapping can be extended to a graph covering projection by Theorem~\ref{p:dob3}.
\end{proof}

In view of the previous observation, we will be proving the NP-hardness results for the following problem:
\computationproblem
{$(a,b)$-Coloring}
{A graph $G$.}
{Does $G$ allow an $(a,b)$-Coloring?}

\begin{theorem}\label{t:ab-coloring}
For every pair of positive integers $b,c$ such that $b+c\ge 3$, the {\sc $(b,c)$-Coloring} problem is NP-complete even for simple bipartite graphs.
\end{theorem}

\begin{proof}
Since the proof of this result is relatively complex, we start by an outline of it.

First observe that the {\sc $(b,c)$-Coloring} and {\sc $(c,b)$-Coloring}
problems are polynomially equivalent on bipartite graphs,
as the colorings are mutually interchangeable
by switching the colors in one class of the bi-partition. Thus we may consider
only $b\ge c$.

\begin{figure}[b]
\begin{center}
\scalebox{0.9}{\includegraphics[page=2]{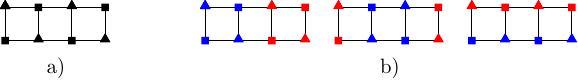}}
\end{center}
\caption{A 20-vertex auxiliary graph $H_1$, used in the first part of the proof of
Theorem~\ref{t:ab-coloring}, and its possible partial $(2,1)$-colorings.}\label{fig:malaab-2.0}
\end{figure}
\begin{figure}
\begin{center}
\scalebox{1.0}{\includegraphics[page=4]{malaab.pdf}}
\caption{Garbage collection and the overall construction for the first part of Theorem~\ref{t:ab-coloring}.
Clause gadgets are in the corners of the figure b).} \label{fig:malaab-4.0}
\vspace{-3ex}
\end{center}
\end{figure}

NP-hardness of the {\sc $(2,1)$-Coloring} is proved in Lemma~\ref{t:ab-coloring-np-21} by a reduction from {\sc
NAE-3-SAT}~\cite{k:GJ79} by using three kinds of building blocks: a clause
gadget (here $K_{1,3}$),
a vertex gadget enforcing the same color on selected subset of vertices, and a
garbage collection that allows to complete the coloring to a cubic graph, that
as a part contains the vertex and clause gadgets linked together to represent a
given instance of {\sc NAE-3-SAT}. This reduction is the actual core of the
proof, and is briefly sketched in Figures~\ref{fig:malaab-2.0} and~\ref{fig:malaab-4.0}.
The former one shows a special gadget $H_1$ used in the color-enforcing
constructions of this reduction. For every variable, copies of $H_1$ are concatenated into a chain, whose one side is connected to ensure that the coloring $P_2$ (or its inverse) is the only admissible (2,1)-coloring, the other side transfers this information as the truth valuation of the variable to the clause gadgets of clauses containing it. The latter figure sketches the garbage collection and the overall
construction of the reduction.

The result on {\sc $(2,1)$-Coloring}, in particular, implies that {\sc $W(2,0,1,0,2)$-Cover} is NP-complete for simple bipartite input graphs, whereas the semi-edgeless dumbbell graph $W(0,1,1,1,0)$ is the smallest semi-edgeless graph whose covering is NP-complete.  

\begin{figure}[t]
\begin{center}
{\includegraphics[width=0.6\textwidth]{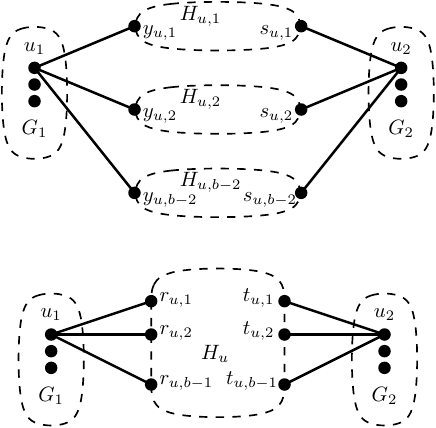}}
\end{center}\vspace{-1ex}
\caption{An illustration of the constructions used in the proof of
Theorem~\ref{t:ab-coloring}; a reduction to {\sc $(b,1)$-Coloring} at the top, and a reduction to {\sc $(b,c)$-Coloring} at the bottom.}
\label{fig:joinsk}
\end{figure}

Further on, we reduce {\sc $(2,1)$-Coloring} to {\sc $(b,1)$-Coloring}  in the proof of Proposition~\ref{t:ab-coloring-np-malaab} by using two copies of the instance of {\sc $(2,1)$-Coloring}
and linking them together by suitable graphs called bridges, that enforce replication of colors for the desired coloring. 
See a brief sketch in Figure~\ref{fig:joinsk} (left).
In view of the initial observation, at this point we know that 
{\sc $(b,1)$-Coloring} and {\sc $(1,b)$-Coloring} are NP-complete on bipartite inputs for all $b\ge 2$.

Then we reduce {\sc $(1,c)$-Coloring} to {\sc $(b,c)$-Coloring} with $b> c$ in Propositions~\ref{p:k>l+1coloring} and~\ref{p:k=l+1coloring}. 
Again we take two copies of an instance of {\sc $(1,c)$-Coloring}, say a $(1+c)$-regular graph $G$. 
As sketched in Figure~\ref{fig:joinsk} (right), we construct an auxiliary graph $H$ with two vertices of degree $b-1$ (called the  ``connector'' vertices), all other vertices being of degree $b+c$ (these are called the ``inner'' vertices). This bridge graph is such that in every two-coloring of its vertices, such that all inner vertices have exactly $b$ neighbors of their own color and exactly $c$ neighbors of the opposite color, while the connector vertices have at most $b$ neighbors of their own color and at most $c$ neighbors of the opposite color, in every such coloring the connector vertices and their neighbors always get the same color. And, moreover, such coloring exists. We then take two copies of $G$ and for every vertex of $G$, identify its copies with the connector vertices of a copy of the bridge graph (thus we have as many copies of the bridge graph as is the number of vertices of $G$). The above stated properties of the bridge graph guarantee that the new graph allows a $(b,c)$-coloring if and only if $G$ allows a $(1,c)$-coloring.

It is worth mentioning that we have provided two different constructions of the bridge gadget. A general one for the case of $b\ge c+2$ and a specific one for the case of $b=c+1$. It is a bit surprising that the case analysis needed to prove the properties of the bridge graph is much more involved for the specific construction in the case of $b=c+1$.  

Finally, for {\sc $(b,b)$-Coloring} with $b\ge 2$ we establish a completely different
reduction in the proof of Proposition~\ref{t:aa-coloring-np}. We reduce from a special variant of satisfiability ($k$-in-$2k$)-SAT$_q$, a generalization of {\sc NAE-3-SAT}. 
\end{proof}

Theorem~\ref{t:ab-coloring} and Observation~\ref{t:ab-coloring-bipartite-cover} imply the following proposition, which concludes the proof of Theorem~\ref{t:2-vertex}.

\begin{proposition}
The {\sc $W(k,m,\ell,p,q)$-Cover} problem is NP-complete for simple bipartite input graphs for all parameter sets such that $k+2m=2p+q\ge 1$, $\ell\ge 1$, and $k+2m+\ell\ge 3$. 
\end{proposition}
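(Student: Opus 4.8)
The plan is to obtain this proposition essentially for free as a corollary of the two results immediately preceding it, since all the genuine combinatorial difficulty has already been packed into Theorem~\ref{t:ab-coloring}. Concretely, I would set $b := k+2m$ and $c := \ell$ and argue that the identity map is a (trivial) reduction from {\sc $(b,c)$-Coloring} to {\sc $W(k,m,\ell,p,q)$-Cover} on the class of simple bipartite inputs. First I would verify that the parameter hypotheses translate exactly into the hypotheses of Theorem~\ref{t:ab-coloring}: the assumption $k+2m=2p+q\ge 1$ gives $b=k+2m\ge 1$, the assumption $\ell\ge 1$ gives $c=\ell\ge 1$, so both $b$ and $c$ are positive integers, and $b+c = k+2m+\ell\ge 3$ by the third assumption. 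Hence {\sc $(b,c)$-Coloring} is NP-complete even for simple bipartite graphs.

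The second step is to invoke Observation~\ref{t:ab-coloring-bipartite-cover}, which applies precisely because $k+2m=2p+q$ holds: a bipartite graph with no semi-edges covers $W(k,m,\ell,p,q)$ if and only if it admits a $(k+2m,\ell)$-coloring, i.e.\ a $(b,c)$-coloring. I would then note that the instances produced by Theorem~\ref{t:ab-coloring} are \emph{simple} bipartite graphs, and that a simple graph has, by Definition~\ref{def:graph}, no semi-edges, loops, or multiedges; thus such a graph is in particular a ``bipartite graph with no semi-edges'' and falls squarely within the scope of the Observation. Therefore, for any such input $G$, the statement ``$G$ covers $W(k,m,\ell,p,q)$'' is logically equivalent to ``$G$ admits a $(b,c)$-coloring'', and the reduction is just the identity on $G$, which trivially preserves simplicity and bipartiteness and runs in polynomial (indeed linear) time. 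Membership in NP was already observed in Section~\ref{s:intro} for all covering problems considered, so NP-hardness suffices to conclude NP-completeness.

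The main point requiring care — and really the only obstacle here — is bookkeeping rather than mathematics: I must make sure the arithmetic identification $b=k+2m$, $c=\ell$ is consistent with the degree of the target, that the equality $k+2m=2p+q$ (guaranteeing $W$ is regular of valency $k+2m+\ell$) is exactly the hypothesis under which Observation~\ref{t:ab-coloring-bipartite-cover} is valid, and that restricting to simple bipartite inputs does not silently weaken the hardness. Since the hard direction is entirely delegated to Theorem~\ref{t:ab-coloring}, no further gadget construction is needed, and this proposition closes the remaining regular case ($\ell\ge 1$, $k+2m=2p+q$) of Theorem~\ref{t:2-vertex}.
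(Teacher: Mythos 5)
Your proposal is correct and is precisely the paper's own argument: the paper derives this proposition in one line by combining Theorem~\ref{t:ab-coloring} (with $b=k+2m$, $c=\ell$) and Observation~\ref{t:ab-coloring-bipartite-cover}, exactly as you do, with the identity map as the reduction on simple bipartite inputs. Your additional bookkeeping (verifying $b,c\ge 1$, $b+c\ge 3$, and that simple graphs have no semi-edges so the Observation applies) is exactly the routine verification the paper leaves implicit.
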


The rest of this subsection is devoted to a detailed proof of Theorem~\ref{t:ab-coloring}.

\begin{observation}\label{o:ab-ba-coloring}
A bipartite graph $G$ allows a $(b,c)$-coloring if and only if it allows a $(c,b)$-coloring.
\end{observation}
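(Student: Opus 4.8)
The plan is to exploit the bipartite structure directly by a color-swapping argument, exactly as hinted in the proof sketch. Fix a bipartition $V(G)=A\cup B$ of the bipartite graph $G$, so that every edge of $G$ has one end in $A$ and one end in $B$. I claim that a single, explicitly described transformation turns any $(b,c)$-coloring into a $(c,b)$-coloring, and vice versa.

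Given a $(b,c)$-coloring $\chi$ of $G$ (with colors, say, $0$ and $1$), define a new $2$-coloring $\chi'$ by keeping the color of every vertex of $A$ unchanged and swapping the two colors on every vertex of $B$; that is, $\chi'(v)=\chi(v)$ for $v\in A$ and $\chi'(v)=1-\chi(v)$ for $v\in B$. The key step is to verify that $\chi'$ is a $(c,b)$-coloring, which I would do by checking the two sides of the bipartition separately. For a vertex $v\in A$, all of its neighbors lie in $B$ and hence have their colors flipped by $\chi'$, while $v$ itself retains its color; consequently the $b$ neighbors that agreed with $v$ under $\chi$ now disagree, and the $c$ that disagreed now agree, so under $\chi'$ the vertex $v$ has exactly $c$ neighbors of its own color and $b$ of the other. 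For a vertex $v\in B$, its neighbors all lie in $A$ and keep their colors under $\chi'$, while $v$ itself is flipped; this again interchanges the roles of the agreeing and disagreeing neighbors, so $v$ likewise ends up with $c$ neighbors of its own color and $b$ of the other. Hence $\chi'$ satisfies the defining condition of a $(c,b)$-coloring.

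Finally, the transformation is symmetric in $b$ and $c$: applying the very same recipe (swap the two colors on $B$, keep them on $A$) to a $(c,b)$-coloring produces, by the identical computation with the roles of $b$ and $c$ interchanged, a $(b,c)$-coloring. This yields both implications and therefore the claimed equivalence. There is essentially no hard obstacle here; the only point requiring care is the bookkeeping, namely confirming that the agree/disagree counts are interchanged for both parts of the bipartition. This follows precisely because for every edge one endpoint is flipped and the other is not, so that an agreeing pair under $\chi$ becomes a disagreeing pair under $\chi'$ and conversely, uniformly across all edges.
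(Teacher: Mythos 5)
Your proof is correct and follows exactly the same approach as the paper's: swapping the two colors on one class of the bipartition, which interchanges the agree/disagree counts at every vertex since each edge has precisely one flipped endpoint. The paper states this in two lines; your version merely spells out the verification in more detail.
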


\begin{proof}
Let $A$ and $B$ be the classes of bi-partition of $V(G)$ and assume that $G$ has a $(b,c)$-coloring using red and blue colors. By swapping these colors on the set $B$ we obtain a $(c,b)$-coloring.
\end{proof}

\begin{corollary}\label{c:ab-ba-coloring}
The problems {\sc $(b,c)$-Coloring} and {\sc $(c,b)$-Coloring} are polynomially equivalent on bipartite graphs. 
\end{corollary}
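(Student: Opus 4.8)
The plan is to derive this directly from Observation~\ref{o:ab-ba-coloring}, whose content is an instance-level equivalence rather than a mere one-directional implication. Concretely, that observation shows that one and the same bipartite graph $G$ admits a $(b,c)$-coloring if and only if it admits a $(c,b)$-coloring, and the witnessing transformation (swapping the two colors on one class of the bipartition) alters only the coloring, leaving the graph $G$ itself untouched. This is the feature I would exploit: the equivalence is realized without modifying the underlying instance at all.

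The key step is therefore to package this as a pair of many-one reductions whose reduction map is the identity on graphs. Given a bipartite instance $G$ of {\sc $(b,c)$-Coloring}, I would output $G$ unchanged, now read as an instance of {\sc $(c,b)$-Coloring}. This map is trivially computable in polynomial (indeed constant) time, it plainly preserves bipartiteness and simplicity, and by Observation~\ref{o:ab-ba-coloring} the two instances have identical yes/no answers; hence it is a valid polynomial-time reduction from {\sc $(b,c)$-Coloring} to {\sc $(c,b)$-Coloring} on bipartite inputs. Interchanging the roles of $b$ and $c$ yields, by exactly the same argument, a reduction in the opposite direction.

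Having obtained polynomial-time reductions in both directions, I would conclude that the two problems are polynomially equivalent on bipartite graphs, as claimed. There is essentially no obstacle here; the only point deserving care is that Observation~\ref{o:ab-ba-coloring} furnishes a genuine bijection between the $(b,c)$-colorings and the $(c,b)$-colorings of a fixed bipartite graph, and it is precisely this two-sidedness that makes the identity map a correct reduction in both directions rather than in just one.
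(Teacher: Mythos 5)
Your proof is correct and matches the paper's intent exactly: the corollary follows immediately from Observation~\ref{o:ab-ba-coloring}, since the instance-level equivalence makes the identity map a valid polynomial-time reduction in both directions. The paper offers no separate argument beyond this, so your explicit packaging of the observation as two identity reductions is precisely the intended proof.
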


\begin{proposition}\label{t:ab-coloring-np-malaab}
For every $b\ge 2$, the problem {\sc $(b,1)$-Coloring} is NP-complete even for simple bipartite graphs. 
\end{proposition}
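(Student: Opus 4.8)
The plan is to prove NP-hardness in two stages: first for the base value $b=2$ by a reduction from \textsc{NAE-3-SAT}, and then to lift this to arbitrary $b\ge 3$ by a reduction from \textsc{$(2,1)$-Coloring}. Membership in NP is immediate, since a claimed coloring is verified in linear time. Throughout it is convenient to observe that a $(b,1)$-coloring lives on a $(b+1)$-regular graph and partitions its edges into a perfect matching (the bichromatic edges, exactly one at each vertex) together with a $b$-regular monochromatic remainder; for $b=2$ on a cubic bipartite graph this means a perfect matching $M$ whose removal leaves a disjoint union of even cycles, each of which is then colored by a single color consistently with $M$ being bichromatic.

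For the base case I would reduce from \textsc{NAE-3-SAT}. The clause gadget is $K_{1,3}$: if its center is forced by the local rule to have exactly two own-color neighbors and one opposite-color neighbor, then among its three leaves both colors necessarily occur, and conversely any non-monochromatic coloring of the leaves extends to a legal color of the center. Hence $K_{1,3}$ is colorable precisely when its three leaves are \emph{not all equal}, which is exactly the NAE condition on the three corresponding literal occurrences. Variables are represented by long wires obtained by concatenating copies of the rigid $20$-vertex gadget $H_1$ (Fig.~\ref{fig:malaab-2.0}); the key property of $H_1$ is that, once one end is pinned, the only admissible partial $(2,1)$-colorings are the pattern $P_2$ and its color-swap, so a wire carries a single consistent truth value to each occurrence of its variable and feeds it into the leaves of the clause gadgets. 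Finally a garbage-collection construction (Fig.~\ref{fig:malaab-4.0}) absorbs all leftover half-degrees and closes the whole object into a \emph{simple cubic bipartite} graph without introducing new coloring freedom, so that satisfying NAE-assignments correspond exactly to $(2,1)$-colorings.

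For $b\ge 3$ I would reduce from \textsc{$(2,1)$-Coloring}. Given a cubic bipartite instance $G$, every vertex already carries one bichromatic and two monochromatic incidences, so it suffices to add exactly $b-2$ further monochromatic incidences at each vertex while making the graph $(b+1)$-regular and keeping it simple and bipartite. I take two disjoint copies $G_1,G_2$ of $G$ and, for each vertex $v$ of $G$, join its copies $v_1\in G_1$ and $v_2\in G_2$ through a bridge gadget that (i) is simple and bipartite, (ii) contributes precisely $b-2$ same-color and no opposite-color incidences to each of $v_1,v_2$, with every internal bridge vertex itself satisfying the $(b,1)$-constraint, and (iii) is rigid enough that in any valid coloring $v_1$ and $v_2$ must receive the same color. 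Property (iii) forces a $(b,1)$-coloring of the whole graph to restrict to a $(2,1)$-coloring on each copy of $G$, while conversely any $(2,1)$-coloring of $G$ is replicated onto $G_1,G_2$ and extended across the bridges; the two copies are exactly what supplies every vertex with a twin to which the extra monochromatic edges can be attached without destroying bipartiteness (for $b=3$ a single matching edge $v_1v_2$ already suffices, since a second bichromatic edge at $v_1$ is impossible).

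The technical heart, and the main obstacle, is the design and verification of the rigid bipartite gadgets: the two-state rigidity of the wire gadget $H_1$ and the color-forcing rigidity of the bridges, each under the unusually tight local rule ``exactly $b$ own-color and exactly one opposite-color neighbor''. One must check that these gadgets stay simple and bipartite and admit no unintended partial colorings, which is delicate precisely because bipartiteness — which elsewhere in the paper makes covering \emph{easier} (Theorem~\ref{p:dob3}) — must here be preserved while retaining full hardness.
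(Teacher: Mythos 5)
Your two-stage plan coincides with the paper's own route: a NAE-3-SAT reduction establishing hardness of {\sc $(2,1)$-Coloring} via the $H_1$-wires, $K_{1,3}$ clause gadgets and a garbage-collection construction, followed by a reduction from {\sc $(2,1)$-Coloring} to {\sc $(b,1)$-Coloring} built from two copies of $G$ joined by color-forcing bridges. The base case is fine as a sketch. The gap is in the second stage: you never construct the bridge gadget — you only postulate the properties (i)--(iii) it must have, while yourself calling its design ``the technical heart and the main obstacle'' — so the hard part of the proof is missing. Worse, the one concrete bridge you do offer is wrong: for $b=3$ a single matching edge $v_1v_2$ does \emph{not} force $\varphi(v_1)=\varphi(v_2)$. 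Nothing forces the unique bichromatic edge at $v_1$ to lie inside $G_1$; it can be the matching edge itself. Concretely, take $G=K_{3,3}$, which has no $(2,1)$-coloring (any two vertices on the same side impose identical demands on the opposite side, so one side is monochromatic, say red; the other side must then be two red and one blue, and the blue vertex sees three opposite-colored neighbors). Yet $G'$, two copies of $K_{3,3}$ plus a perfect matching between corresponding vertices, admits a $(3,1)$-coloring: color the first copy entirely red and the second entirely blue, so every vertex has its three own-color neighbors inside its copy and its single opposite-color neighbor across the matching. Thus $G'$ is a yes-instance while $G$ is a no-instance, and the reduction fails. Your justification (``a second bichromatic edge at $v_1$ is impossible'') is a non sequitur: it presupposes a bichromatic edge inside $G_1$, which is exactly what cannot be guaranteed.

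The paper closes precisely this gap with an explicit gadget $F$: two disjoint copies of $K_{b,b}$ with sides $A_1,B_1$ and $A_2,B_2$, joined by a perfect matching between $A_1$ and $A_2$ and another between $B_1$ and $B_2$; then pendant vertices $u,v$ are attached to $u'\in A_1$ and $v'\in B_1$ and the edge $u'v'$ is deleted. For $b\ge 3$ the $(b,1)$-rule forces the $K_{b,b}$ on $A_2\cup B_2$ to be monochromatic (a vertex may have at most one opposite-color neighbor), and this propagates so that $u,u',v,v'$ all receive the same color; one then glues $b-2$ copies of $F$ per vertex $w$ of $G$, identifying $w_1$ with the $u$'s and $w_2$ with the $v$'s, obtaining a bipartite $(b+1)$-regular instance. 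This is exactly the rigidity your property (iii) demands, and note that the monochromaticity argument genuinely needs $b\ge 3$, which is a structural reason why no degenerate bridge such as a single edge can work. With such a gadget supplied and verified, your outline becomes the paper's proof; without it, the claim for $b\ge 3$ is unproven.
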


We will develop the proof as a series of claims.
We first consider $(2,1)$-coloring of cubic bipartite graphs. Through our arguments the classes of bi-partition will be indicated in figures by vertex shapes --- squares and triangles, while for the $(2,1)$-coloring we use red and blue colors.

Observe first 
that whenever a graph $G$ contains a $C_4$ as an induced subgraph then in any $(2,1)$-coloring of $G$ 
it is impossible to color exactly three vertices of the $C_4$ by the same color. 
The reason is that in such a case the remaining vertex would be adjacent to two vertices of the opposite color, which is not allowed. By the same argument we deduce that if both colors are used on the $C_4$ then vertices of the same color are adjacent.

The following two observations are immediate.

\begin{figure}
\begin{center}
\scalebox{1.1}{\includegraphics[page=1]{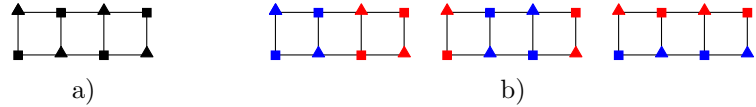}}
\end{center}
\caption{Partial $(2,1)$-colorings of an 8-vertex auxiliary subgraph.}\label{fig:malaab-1}
\end{figure}

\begin{observation}
Whenever a graph $G$ contains as a subgraph the graph on 8 vertices depicted in Figure~\ref{fig:malaab-1} a) then in any in any $(2,1)$-coloring of $G$ the color classes match on the subgraph one of the three patterns depicted in red and blue colors in Figure~\ref{fig:malaab-1} b).
\end{observation}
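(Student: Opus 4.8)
The plan is to reduce the statement to a short, forced case analysis driven by the two facts about induced $4$-cycles established just above, together with the defining local condition of a $(2,1)$-coloring: in a cubic graph every vertex sees exactly two neighbours of its own colour and exactly one neighbour of the opposite colour. Recall that these facts say that on an induced $C_4$ one cannot colour exactly three of its four vertices alike, and that whenever both colours occur on a $C_4$ the two vertices of each colour are adjacent. Consequently, on each induced $C_4$ of the gadget only two local patterns can appear: either all four vertices receive the same colour, or the cycle splits into two adjacent same-coloured pairs. Note that both facts are stated for the red/blue colouring and are insensitive to the underlying bipartition (the square/triangle shapes), so the fixed bipartition of the $8$-vertex subgraph plays no role in the argument.

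First I would locate the induced $4$-cycles inside the $8$-vertex subgraph of Fig.~\ref{fig:malaab-1}\,a) and record, for each of them, the two admissible local patterns above. Fixing the colour of one chosen vertex (the complementary global colouring is obtained by swapping red and blue), I would propagate colours across the vertices and edges shared by these cycles. At every vertex lying entirely inside the subgraph I apply the $2$-same/$1$-other degree condition: once such a vertex is known to have two same-coloured and one oppositely-coloured neighbour, the colours of its remaining neighbours are determined, and this information is pushed forward along the adjacent $C_4$. Because the cycles overlap, each local decision forces colours several edges away, so only a handful of branches remain open at each step.

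Carrying out this propagation leaves finitely many consistent assignments, and I would verify by direct inspection that exactly the three colourings drawn in Fig.~\ref{fig:malaab-1}\,b) survive, and that each of them is genuinely realizable (i.e.\ extends to a valid partial $(2,1)$-colouring of the subgraph). The main obstacle is precisely the bookkeeping of this case distinction: since the induced $4$-cycles overlap, their constraints interact, and one must take care to \emph{both} eliminate every spurious pattern -- in particular those excluded by the ``no three alike on a $C_4$'' rule and by the degree condition at the interior vertices -- \emph{and} confirm that no admissible pattern beyond the three listed has been missed. I would exploit the symmetry of the gadget together with the red/blue swap to collapse equivalent branches and keep the number of explicit cases small.
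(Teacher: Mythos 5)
Your proposal is correct and takes essentially the same approach as the paper: the paper states this observation as ``immediate'' from the two $C_4$ facts established just before it, and your explicit propagation/finite case analysis -- using those facts together with the local degree condition of a $(2,1)$-coloring (in its ``at most two same, at most one opposite'' form appropriate for a subgraph), collapsed under the red/blue swap -- is precisely that argument spelled out. The only remaining work in both cases is the routine bookkeeping against the actual $8$-vertex gadget in the figure, which the paper likewise leaves to the reader.
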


\begin{figure}
\begin{center}
\scalebox{0.9}{\includegraphics[page=2]{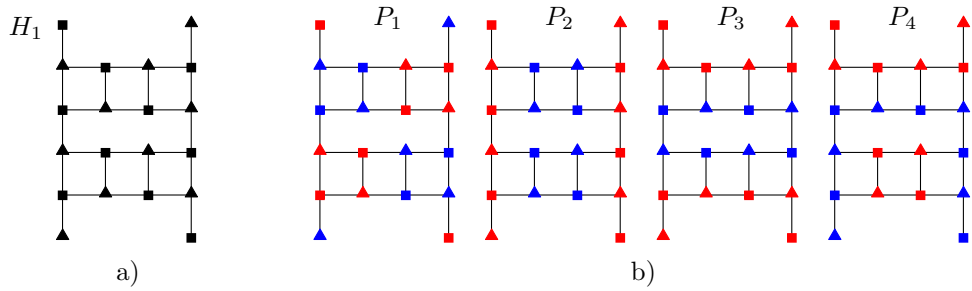}}
\end{center}
\caption{A 20-vertex auxiliary graph $H_1$ and its possible partial $(2,1)$-colorings.}\label{fig:malaab-2}
\end{figure}

\begin{observation}
Whenever a graph $G$ contains as a subgraph the graph $H_1$ on 20 vertices depicted in Figure~\ref{fig:malaab-2} a) then in any 
in any $(2,1)$-coloring of $G$ the color classes match on $H_1$ one of the four patterns depicted in red and blue colors in Figure~\ref{fig:malaab-2} b).
\end{observation}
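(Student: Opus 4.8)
The plan is to exploit the two facts already set up for $(2,1)$-colorings of cubic graphs. Since every vertex must see exactly two neighbors of its own color and exactly one of the other, in any $(2,1)$-coloring the monochromatic edges form a $2$-regular subgraph (a disjoint union of cycles) and the bichromatic edges form a perfect matching; this is the invariant I would track throughout. More importantly, the $20$-vertex graph $H_1$ of Fig.~\ref{fig:malaab-2}a) is assembled from (possibly overlapping) copies of the $8$-vertex auxiliary block of Fig.~\ref{fig:malaab-1}a), and the preceding observation already pins the coloring of each such block down to one of only three local patterns, namely those of Fig.~\ref{fig:malaab-1}b). The strategy is therefore local-to-global: enumerate the admissible patterns on each block and determine which combinations survive the constraints imposed where the blocks are glued together inside $H_1$.

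First I would fix, for each copy of the $8$-vertex block, its three possible colorings and record the colors they force on the vertices shared with, or adjacent to, a neighboring block. Two kinds of constraints then propagate across each interface: \emph{consistency}, i.e.\ a shared vertex must receive the same color in every block containing it, and the \emph{degree condition}, i.e.\ every interface vertex, now seeing its full neighborhood inside $H_1$, must still have exactly two same-colored and one opposite-colored neighbor. I would also keep invoking the two $C_4$-lemmas stated at the start of the subsection --- that no induced $C_4$ can be colored three-to-one, and that on a two-to-two colored induced $C_4$ the like-colored vertices must be adjacent --- since gluing two blocks creates $4$-cycles whose colorings are thereby further restricted. Starting from one block, each of its three admissible patterns forces the colors of the interface vertices, which in turn eliminates all but a few of the three patterns available on the next block, and so on across $H_1$.

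Carrying this propagation through, the a priori $3^{N}$ combinations, where $N$ is the number of blocks, collapse to exactly four globally consistent colorings, which I expect to match precisely the four patterns drawn in Fig.~\ref{fig:malaab-2}b); I would finish by writing each of the four out explicitly as a $(2,1)$-coloring of $H_1$ to confirm that all four are genuinely realizable and not merely locally consistent. The hard part will be the interface case analysis itself: one must check, block by block, that the degree and $C_4$ constraints eliminate every combination outside the intended four without accidentally discarding a valid one. This bookkeeping is roughly halved by the global red/blue swap symmetry, which sends $(2,1)$-colorings to $(2,1)$-colorings and pairs up the four target patterns, so that in effect only two of them need to be traced from scratch.
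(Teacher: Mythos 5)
Your proposal follows essentially the same route the paper intends: the paper presents this observation as immediate from the preceding $8$-vertex-block observation together with the two $C_4$ facts, which is exactly the block-decomposition and pattern-propagation argument you describe (including the red/blue swap symmetry that pairs up the four patterns). Your concluding realizability check of the four colorings is not strictly needed for the statement but is harmless, so the plan is correct as given.
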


\begin{figure}
\begin{center}
\scalebox{0.9}{\includegraphics[page=3]{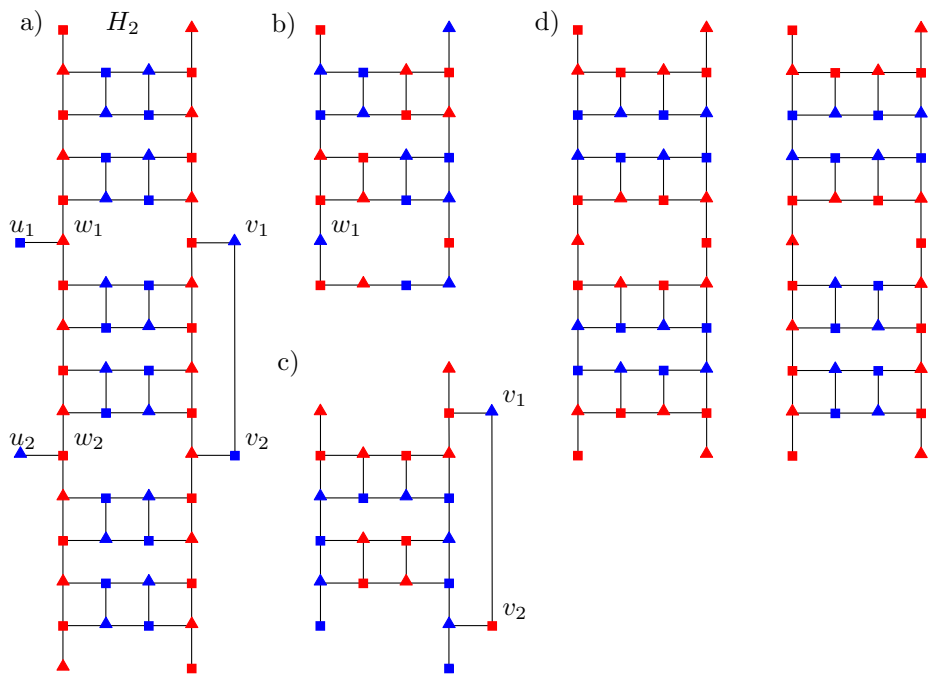}}
\end{center}
\caption{Forcing the same color on $u_1$ and $u_2$.}\label{fig:malaab-3}
\end{figure}

\begin{lemma}
If a cubic graph $G$ contains the graph $H_2$ (depicted in Figure~\ref{fig:malaab-3}~a)) as a subgraph, then in any $(2,1)$-coloring of $G$ the vertices $u_1$ and $u_2$ have the same color and their neighbors $w_1$, $w_2$ have the opposite color.
\end{lemma}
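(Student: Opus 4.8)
The plan is to reduce the claim to the already-established structure of admissible partial $(2,1)$-colorings of the smaller gadgets, and then to propagate the forced color equalities through $H_2$. Concretely, I would first fix an arbitrary $(2,1)$-coloring of $G$ and look only at its restriction to the embedded copy of $H_2$. Since the $(2,1)$-condition is local (each vertex of degree~$3$ must see exactly two neighbors of its own color and one of the opposite color), every vertex of $H_2$ all of whose three neighbors already lie inside $H_2$ must satisfy this condition; hence the restriction to each embedded copy of $H_1$ and of the $8$-vertex gadget from Fig.~\ref{fig:malaab-1} must be one of the admissible patterns recorded in the preceding observations.

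The first step is to extract, for each copy of $H_1$ occurring in $H_2$, what the four admissible patterns from Fig.~\ref{fig:malaab-2}~b) force on the \emph{interface} vertices that $H_2$ identifies between consecutive gadgets. The feature I want to isolate is that each admissible pattern transmits the color of an ``input'' interface vertex to a fixed ``output'' interface vertex. The two $C_4$-observations preceding the lemma are exactly what pin these interface colors down: no induced $C_4$ may carry precisely three vertices of one color, and on a two-colored $C_4$ the two same-colored vertices must be adjacent. These rules restrict the admissible patterns on each gadget to the four listed ones and fix how the interface colors relate.

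The second step is to trace this propagation along the chain of gadgets joining $u_1$ to $u_2$. Starting from the color of $u_1$, each copy forwards that color to the next interface vertex according to its admissible pattern, so after passing through all copies the color reaches $u_2$ unchanged, giving that $u_1$ and $u_2$ receive the same color. The assertion about $w_1,w_2$ then follows by applying the $(2,1)$-condition directly at $u_1$ and $u_2$: each $u_i$ has a single neighbor of the opposite color, and the admissible pattern at that end of the gadget identifies this neighbor as precisely $w_i$, so $w_1$ and $w_2$ carry the color opposite to $u_1,u_2$.

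The main obstacle I anticipate is ruling out a ``color flip'' across the chain, i.e.\ excluding an otherwise locally valid coloring in which $u_1$ and $u_2$ end up with opposite colors. This reduces to a finite but somewhat delicate case check over which of the four $H_1$-patterns may sit on adjacent copies once their shared interface vertices are identified, together with a parity argument on how many ``flipping'' patterns can occur along the chain. I expect the $C_4$-constraints to eliminate every combination except the faithful color-copying ones, but verifying this compatibility cleanly, rather than the final propagation itself, is where the genuine work lies.
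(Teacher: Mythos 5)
Your overall strategy---restrict the coloring to the embedded copies of $H_1$, invoke the preceding observation that only the four patterns $P_1,\dots,P_4$ are admissible on each copy, and then check which combinations are consistent on $H_2$---is exactly the paper's strategy. However, your write-up stops precisely where the proof begins: the entire content of the paper's argument is the concrete exclusion of patterns, and you defer it as a ``finite but somewhat delicate case check'' that you ``expect'' to succeed. The paper's proof consists of two short, specific eliminations: (a) if \emph{any} of the three copies of $H_1$ in $H_2$ carried pattern $P_1$, then all three copies would have to carry $P_1$, and this forces $w_1$ to have two neighbors of the opposite color, violating the $(2,1)$-condition; (b) if the \emph{middle} copy carried $P_4$, the linking vertices $v_1,v_2$ would each have two neighbors of the opposite color, again a violation. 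What remains is that the middle copy carries $P_2$ or $P_3$, and \emph{both} of these patterns already force $u_1,u_2$ to share a color and $w_1,w_2$ to have the opposite one. Without steps (a) and (b) your argument has no content; asserting that the patterns ``transmit'' colors across interfaces is just a restatement of the lemma's conclusion.

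A second, more structural issue: your picture of propagating a color along a chain, with a parity count of ``flipping'' patterns, misrepresents how the forcing works and would lead you astray if you tried to execute it. The conclusion is \emph{local}: $u_1$ and $u_2$ are both attached to the middle copy of $H_1$, so once $P_1$ and $P_4$ are excluded for that copy, the claim follows from inspecting $P_2$ and $P_3$ alone---no propagation or parity argument over the whole chain is needed or even available. Indeed, the paper explicitly notes that $P_2$ and $P_3$ can both occur on the same $H_2$, so admissible colorings are \emph{not} uniform along the chain; any argument premised on all copies carrying ``the same'' color-copying pattern, or on counting flips, would be attempting to prove something false. The correct statement is only that the two surviving patterns for the middle copy agree on what they impose on the four interface vertices $u_1,u_2,w_1,w_2$.
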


\begin{proof}
The graph $H_2$ contains three induced copies of $H_1$. If the pattern $P_1$ of Figure~\ref{fig:malaab-2} b) was used on some copy, then the same pattern must be used on all three copies.
Consequently, the vertex $w_1$ has two neighbors of the opposite color as indicated in Figure~\ref{fig:malaab-3} b), which is not allowed. This excludes the pattern $P_1$ from our reasoning.

If the pattern $P_4$ was used on the middle copy of $H_1$, then the vertices $v_1$ and $v_2$ have two neighbors of the opposite color as indicated in Figure~\ref{fig:malaab-3} c), which is also not allowed.

Therefore the middle copy of $H_1$ uses either pattern $P_2$ or $P_3$ and the claim follows. Note that both patterns might be used on the same $H_2$ see, Figure~\ref{fig:malaab-3}~a) and d).
\end{proof}

\begin{lemma}\label{t:ab-coloring-np-21}
The problem {\sc $(2,1)$-Coloring} is NP-complete even for simple cubic bipartite graphs. 
\end{lemma}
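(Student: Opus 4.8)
The plan is to prove NP-hardness by a reduction from \textsc{NAE-3-SAT}, the variant of 3-SAT asking for a truth assignment in which no clause has all three literals equal; membership in NP is immediate, since a proposed 2-coloring can be checked vertex-by-vertex in linear time. Observe first that any graph admitting a $(2,1)$-coloring must be cubic, so the whole construction has to be made $3$-regular and bipartite. Given a formula $\phi$ with variables $x_1,\dots,x_n$ and clauses $C_1,\dots,C_m$, I would assemble a graph $G_\phi$ from the three types of blocks announced in the proof sketch: a clause gadget, a variable (color-enforcing) gadget, and a garbage-collection part.

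For the clause gadget I would take a copy of $K_{1,3}$ for each clause $C_j$, with center $c_j$ and three leaves. The key point is that in any $(2,1)$-coloring the center $c_j$ sees exactly two neighbors of its own color and one of the opposite, so the three leaves never receive a common color, and conversely any non-monochromatic coloring of the leaves extends by coloring $c_j$ with the majority color. Identifying one color with \emph{true} and the other with \emph{false}, this is exactly the not-all-equal constraint on the three literals. To keep the truth values consistent across occurrences I would use the color-forcing gadget $H_2$ established above: since $H_2$ forces its terminals $u_1,u_2$ to share a color (and their neighbors $w_1,w_2$ to take the opposite one), concatenating copies of $H_1$/$H_2$ yields, for each variable $x_i$, a chain all of whose admissible colorings restrict to the pattern $P_2$ or its color-swap. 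Anchoring one end of the chain pins it to these two global states, corresponding to $x_i$ being true or false, while the other end exposes terminals whose colors are rigidly tied to this value; positive and negated occurrences are read off from terminals of the two colors and wired into the appropriate leaves of the clause gadgets containing that literal.

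Once the clause and variable gadgets are linked, the leaves of the $K_{1,3}$'s and the exposed chain terminals will have degree below three, so the final step is a garbage-collection construction that raises every degree to exactly $3$ without destroying bipartiteness and without imposing any new constraint on the colors already forced. Concretely I would absorb all degree-deficient vertices into an auxiliary cubic bipartite structure (as sketched in Fig.~\ref{fig:malaab-4.0}) that always admits a compatible completion of the coloring, so that $G_\phi$ is simple, bipartite, cubic, and computable in polynomial time. The correctness is then a two-way check: a $(2,1)$-coloring of $G_\phi$ yields, through the forcing gadgets, a consistent truth assignment whose clause gadgets witness NAE-satisfaction; conversely, a NAE-satisfying assignment colors each variable chain in the state $P_2$ or its swap, each clause center by the majority rule, and the garbage collection by its guaranteed completion.

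The main obstacle I anticipate is not the clause gadget, which cleanly encodes the NAE constraint, but the simultaneous satisfaction of the global requirements: the graph must stay \emph{simple} and \emph{bipartite} while every vertex is forced to degree exactly three, and the garbage-collection part must be colorable for \emph{every} truth valuation transmitted by the chains, i.e.\ it must add no constraint beyond the intended ones. Verifying that the color-forcing chains really admit only the two intended global colorings (ruling out patterns $P_1$ and $P_4$ along the entire chain, exactly as in the $H_2$ argument) and that these colorings thread consistently through the shared leaf and terminal vertices is the delicate bookkeeping at the heart of the reduction.
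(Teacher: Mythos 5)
Your proposal follows essentially the same route as the paper's proof: a reduction from \textsc{NAE-3-SAT} with $K_{1,3}$ clause gadgets, chains of $H_1$/$H_2$ color-forcing gadgets serving as variable gadgets (the paper closes them into cyclic chains of $2k$ copies of $H_1$, which simultaneously handles consistency and degree bookkeeping), and a garbage-collection stage built from the bipartite graph $F\times K_2$ that completes everything to a simple cubic bipartite graph. The only cosmetic difference is that the paper reduces from the monotone (negation-free) variant of \textsc{NAE-3-SAT}, so all occurrences are transmitted from same-colored terminals, whereas you also wire negated occurrences to opposite-colored terminals; both variants are NP-complete and both wirings transmit colors correctly, since in either case the terminal's single allowed opposite-colored neighbor is already used inside the gadget.
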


\begin{proof}
\sloppy
We reduce from a well known NP-complete variant of the  {\sc Monotone-NAE-3-SAT} problem, which given a formula $\phi$ in CNF without negations, consisting of clauses $C_1, \ldots, C_m$, where each $C_j$ is a disjunction of exactly three distinct literals, asks whether $\phi$ has a truth assignment such that each clause contains at least one positively valued literal, as well as at least one negatively valued one. In fact, this is equivalent to bicolorability of 3-uniform hypergraphs. The NP-completeness of this problem is proven in~\cite{Schaefer78}.
\fussy

For a given $\phi$ we build a bipartite cubic graph $G$ that allows a $(2,1)$-coloring if and only if $\phi$ allows the required assignment. The graph has several functional blocks: variable gadgets, clause gadgets enforcing the valid truth assignment already for a partial $(2,1)$-coloring and also garbage collection allowing to extend the partial coloring to the entire cubic graph. By partial $(2,1)$-coloring we mean a restriction of a $(2,1)$-coloring to a subgraph, i.e. a vertex 2-coloring where every vertex has at most two neighbors of its own color and at most one neighbor of the other color. 

For a variable $z$ that in $\phi$ has $k$ occurrences, we build a variable gadget consisting of a cyclic chain of $2k$ graphs $H_1$ linked together with further vertices $u_i^z$ and $v_i^z$ so each three consecutive copies of $H_1$ induce the graph $H_2$ of Fig~\ref{fig:malaab-3} a). In this gadget the color of $u_1^z,\dots,u_{2k}^z$ represent the truth assignment of $z$.

The clause gadget for a clause $C_j$ is a claw $K_{1,3}$. When a variable $z$ occurs in a clause $C_j$ we add an edge between an $u_{2i}^z$ and a unique leaf of the clause gadget $K_{1,3}$ so that each clause gadget is linked to a distinct $u_{2i}^z$. 

Observe that any partial $(2,1)$-coloring of the so far formed graph corresponds to the valid truth assignments and vice-versa: leaves of each clause gadget $K_{1,3}$ are not monochromatic, while the edges added between the vertex and clause gadget have both end of the same color as each $u_{2i}^z$ has already a neighbor $v_{2i}^z$ of the other color.

It remains to extend the graph to a cubic graph so that the partial $(2,1)$-coloring is preserved within a ``full'' $(2,1)$-coloring. We first add additional copies of clause gadgets and link them to the vertex gadgets by the same process, so that each $u_{2i}^z$ is linked to exactly two clause gadgets. We then repeat the same process twice for vertices $u_{2i-1}^z$ with odd valued indices. Now the only vertices that do not have degree three are the former leaves of clause gadgets, where each of them is now of degree two.

\begin{figure}
\begin{center}
\scalebox{0.95}{\includegraphics[page=4]{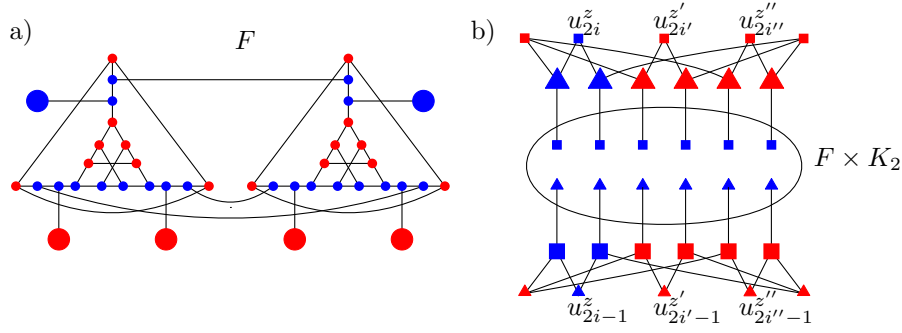}}
\caption{Garbage collection and the overall construction for Theorem~\ref{t:ab-coloring-np-21}.
Clause gadgets are in the corners of the part b).} \label{fig:malaab-4}
\end{center}
\end{figure}

For this purpose we involve an auxiliary graph $F$ and one of its partial $(2,1)$-colorings depicted in Figure~\ref{fig:malaab-4} a). For each clause $C_j$ we take a copy of the bipartite graph $F\times K_2$ and
merge its 12 vertices of degree one with the twelve vertices of degree two stemming from the four copies of the clause gadgets as shown in Figure~\ref{fig:malaab-4} a). The merged vertices are indicated by big symbols.

This step completes the construction of the desired simple cubic bipartite graph $G$ that allows a  $(2,1)$-coloring if and only if $\phi$ allows a not-all-equal truth assignment. The way how such truth assignment can be derived from a $(2,1)$-coloring has been already discussed. In the opposite way, the truth assignment yields a coloring of the vertex gadgets, say the color blue would represent variables assigned true, while red would represent variable assigned false. Then the coloring can be completed to clause gadgets and auxiliary graphs $F\times K_2$ by using patterns depicted in Figure~\ref{fig:malaab-4}. In the last step we involve the standard lift of a coloring to a product, namely that  the same color is used on the two copies of a vertex in the $F\times K_2$ as the original vertex has in $F$.
\end{proof}

\begin{proof}[Proof of Proposition~\ref{t:ab-coloring-np-malaab}]
For $b\ge 3$ we reduce the {\sc $(2,1)$-Coloring} to {\sc $(b,1)$-Coloring}. Let $G$ be a bipartite cubic graph whose $(2,1)$-coloring has to be decided. 

First we construct an auxiliary graph $F$ consisting of two disjoint unions of $K_{b,b}$ with classes of bi-partition $A_1,B_1,A_2$ and $B_2$ that are joined together by two perfect matchings, one between sets $A_1$ and $A_2$ and the other between $B_1$ and $B_2$. Finally, we add two vertices 
$u$ and $v$, make $u$ adjacent to some $u'\in A_1$ and $v$ adjacent to some $v'\in B_1$ and 
remove the edge $(u',v')$.

We claim that in any partial $(b,1)$-coloring of $F$ the vertices $u,v,u'$ and $v'$ receive the same color.
Observe first that the complete bipartite graph $K_{b,b}$ on $A_2$ and $B_2$ is monochromatic as otherwise one vertex would have at least two neighbors of the opposite color. Now each vertex of $A_2$ and $B_2$ has $a$ neighbors of the same color, say red, so the sets $A_1$ and $B_1$ are blue. The vertex $u'$ now has a single red neighbor and $b-1$ blue neighbors so $u$ is blue as well. Analogously for $v$ and $v'$.

We take two copies $G_1$ and $G_2$ of the graph $G$ and for each $w\in V_G$ we insert $b-2$ copies $F^w_1,\dots,F^w_{b-2}$ of the graph $F$, where we identify $w_1$ with $u^w_1,\dots,u^w_{b-2}$ and also $w_2$ with $v^w_1,\dots,v^w_{b-2}$. By this process we get a bipartite $(b+1)$-regular graph $H$.

The fact that graph $H$ allows an $(b,1)$-coloring if and only if $G$ allows an $(2,1)$-coloring follows from the 
fact that the all $b-2$ neighbors of any $w_1$ outside $G_1$, i.e. inside the copies of $F$, have the same color as $w_1$.
\end{proof}

\begin{proposition}\label{p:k>l+1coloring}
For every $c\ge 2$ and $b\ge c+2$, the {\sc $(b,c)$-Coloring} problem is NP-complete even for simple bipartite graphs.
\end{proposition}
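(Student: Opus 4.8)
The plan is to reduce from the \textsc{$(1,c)$-Coloring} problem. By Corollary~\ref{c:ab-ba-coloring} the problems \textsc{$(1,c)$-Coloring} and \textsc{$(c,1)$-Coloring} are polynomially equivalent on bipartite graphs, and \textsc{$(c,1)$-Coloring} is NP-complete for every $c\ge 2$ by Proposition~\ref{t:ab-coloring-np-malaab}; hence \textsc{$(1,c)$-Coloring} is NP-complete on simple bipartite graphs whenever $c\ge 2$. Since a $(1,c)$-coloring forces every vertex to have degree $1+c$, we may assume the instance is a simple bipartite $(1+c)$-regular graph $G$ (otherwise it is trivially a no-instance).

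The core of the reduction is a \emph{bridge} gadget $B=B_{b,c}$: a simple bipartite graph with two distinguished \emph{connector} vertices $u,v$ of degree $b-1$, all other (\emph{inner}) vertices having degree $b+c$, and enjoying two properties. (i) \emph{Forcing:} in every $2$-coloring of $B$ in which each inner vertex has exactly $b$ neighbors of its own color and exactly $c$ of the other color, while each connector has at most $b$ of its own color and at most $c$ of the other color, the connectors $u,v$ together with all their neighbors receive one common color. (ii) \emph{Realizability:} such a coloring exists, and (by symmetry) with either color prescribed on the connectors. Given $G$, I would take two disjoint copies $G_1,G_2$ (with $w\mapsto w_1,w_2$), and for each vertex $w$ attach a private copy $B^w$ of the bridge by identifying its connectors with $w_1$ and $w_2$. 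The resulting graph $G'$ is clearly simple, $(b+c)$-regular and of size linear in $|V(G)|$; it stays bipartite if the bipartition of the second copy is chosen according to the parity relation that $B$ imposes between its two connectors.

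For the equivalence, note that each $w_i$ has exactly $b-1$ edges inside $B^w$ and $1+c$ edges inside $G_i$. Given a $(b,c)$-coloring of $G'$, its restriction to any $B^w$ meets the hypothesis of (i): every inner vertex of $B^w$ sees all its neighbors inside $B^w$ and therefore has the exact $(b,c)$ split, while each connector has at most $b$ same-colored and at most $c$ oppositely-colored bridge-neighbors because its global totals are precisely $b$ and $c$. Property (i) then forces all $b-1$ bridge-neighbors of $w_i$ to share its color, so inside $G_i$ the vertex $w_i$ keeps exactly $1$ neighbor of its own color and $c$ of the other; thus $G_i$, and hence $G$, carries a $(1,c)$-coloring. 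Conversely, a $(1,c)$-coloring of $G$, copied identically onto $G_1$ and $G_2$, gives $w_1$ and $w_2$ the same color, and property (ii) extends it over every $B^w$; a degree count confirms that the outcome is a $(b,c)$-coloring of $G'$. As membership in NP is immediate, NP-completeness follows once $B_{b,c}$ is constructed.

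The main obstacle is exactly the construction of $B_{b,c}$ and the verification of property (i). The natural building block, a complete bipartite $K_{m,m}$, is forced monochromatic in a $(b,c)$-coloring only when $m>2c$, so it pins a neighborhood to one color only in the regime $b>2c$ and fails for the harder subrange $c+2\le b\le 2c$. I would therefore assemble $B_{b,c}$ from weaker forcing blocks and \emph{amplify}/chain them, exploiting the available slack $b-c\ge 2$ to propagate a single color from $u$'s block to $v$'s block so that the two forced colors must coincide. The delicate part is showing that the assembled gadget admits \emph{no} unintended valid partial coloring (property (i)) while still admitting at least one (property (ii)); this rigidity analysis, rather than the surrounding reduction, is where the real work lies, which also explains why the borderline case $b=c+1$ needs a separate, more involved gadget.
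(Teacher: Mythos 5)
Your reduction skeleton is sound and closely parallels the paper's: reduce from \textsc{$(1,c)$-Coloring} (NP-complete on simple bipartite graphs by Corollary~\ref{c:ab-ba-coloring} and Proposition~\ref{t:ab-coloring-np-malaab}), duplicate the instance, and join the two copies of each vertex by a degree-compensating bridge whose internal rigidity forces the restriction to each copy to be a $(1,c)$-coloring. But there is a genuine gap: the bridge gadget $B_{b,c}$ with your strong forcing property (i) is never constructed, and you yourself flag its construction and rigidity analysis as ``where the real work lies.'' That gadget \emph{is} the proof; everything around it is routine bookkeeping. Your sketch of how to build it (chaining ``weaker forcing blocks'' and exploiting the slack $b-c\ge 2$) is a plan, not an argument: you correctly observe that a forced-monochromatic $K_{m,m}$ needs $m\ge 2c+1$, hence consumes at least $2c+1\le b$ same-colored edges at each of its vertices, so this technique is unavailable precisely in the range $c+2\le b\le 2c$ --- and you offer no concrete mechanism replacing it there. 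As written, the statement has been reduced to an unproven combinatorial existence claim.

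It is instructive to compare with how the paper escapes this difficulty: it does not need your strong property (i) at all. Its bridge $H$ has two \emph{pendant} connectors $x,t$ of degree $1$ (with $b-1$ parallel bridge copies per vertex pair, rather than one bridge with connectors of degree $b-1$), and its forcing lemma (Lemma~\ref{l:Hink>l+1abcolor}) asserts only a dichotomy: either $\varphi(x)=\varphi(y)=\varphi(s)=\varphi(t)$, or the crossed pattern $\varphi(x)=\varphi(s)\neq\varphi(y)=\varphi(t)$. The crossed pattern is then excluded \emph{globally} in the reduction, because it would give $u_1$ exactly $b-1>c$ opposite-colored neighbors. Crucially, this weaker dichotomy is proved by a counting and parity argument (counting bichromatic edges from the red side and from the blue side, plus the handshake lemma, using that the number of inner vertices is divisible by $4$), which holds for \emph{any} graph of the prescribed degree shape --- no structural rigidity analysis whatsoever. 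Realizability then comes from one explicit graph: two copies of $K_{b,b}$ joined by $c$ perfect matchings on each side of the bipartition, with one edge deleted and replaced by the two pendant edges. This works uniformly for all $b\ge c+2$, $c\ge 2$, including your ``hard'' subrange. A gadget with your strong property (i) is essentially what the paper is forced to build for the borderline case $b=c+1$ (Proposition~\ref{p:k=l+1coloring}), and there the verification is indeed a lengthy case analysis --- exactly the kind of work your proposal would have to supply, but does not.
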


\begin{figure}[h]
\begin{center}
\scalebox{0.9}{\includegraphics{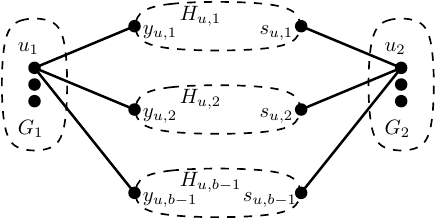}}
\end{center}
\caption{An illustration to the construction of graph $G'$ from Proposition~\ref{p:k>l+1coloring}.}
\label{fig:k>l+1H}
\end{figure}

\begin{proof}
We will prove that {\sc $(1,c)$-Coloring} polynomially reduces to {\sc $(b,c)$-Coloring} for simple bipartite inputs. Given a simple bipartite $(1+c)$-regular graph $G$ as input of  {\sc $(1,c)$-Coloring}, construct a graph $G'$ by taking two disjoint copies $G_1,G_2$ of $G$ and connecting them by ``bridges'' as follows. Let $H$ be a graph with two pendant vertices $x,t$ of degree 1 and all other vertices of degree $b+c$. Let $y$ be the neighbor of $x$ and $s$ the neighbor of $t$ in $H$. The vertices of degree $b+c$ in $H$ will be called its {\em inner vertices}. Let the companion vertices of $G_1$ and $G_2$ that are copies of a vertex $u$ of $G$ be denoted by $u_1$ and $u_2$, respectively. For every vertex $u\in V(G)$, take $b-1$ copies $H_{u,i},i=1,2,\ldots,b-1$ of $H$, with vertices of $H_{u,i}$ denoted by $z_{u,i}$, for $z\in V(H)$. For every $u\in V(G)$, identify the vertices $x_{u,i}, i=1,2,\ldots, b-1$ with the vertex $u_1$ and  identify the vertices $t_{u,i}, i=1,2,\ldots, b-1$ with the vertex $u_2$. See an illustration in Figure~\ref{fig:k>l+1H}.

\begin{lemma}\label{l:Hink>l+1abcolor}
Suppose that the number of inner vertices of $H$ is divisible by 4. Let $\varphi:V(H)\longrightarrow\{red,blue\}$ be a red-blue coloring of the vertices of $H$ such that every inner vertex has exactly $b$ neighbors of its own color (and hence $c$ neighbors of the other color). Then either $\varphi(x)=\varphi(y)=\varphi(s)=\varphi(t)$, or $\varphi(x)=\varphi(s)\neq\varphi(y)=\varphi(t)$.
\end{lemma}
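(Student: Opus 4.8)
The plan is to split the lemma into an easy counting part and a harder structural part. Note first that $x$ and $t$ are the only vertices of $H$ that are not inner, and both have degree $1$; hence $xy$ and $ts$ are the only edges of $H$ with exactly one inner endpoint. I would prove separately that (I) under $\varphi$ the two edges $xy$ and $ts$ are either both monochromatic or both bichromatic, and (II) $\varphi(y)=\varphi(t)$. Granting (I) and (II), the lemma follows by inspecting the two cases of (I). If $xy$ and $ts$ are both monochromatic, then $\varphi(x)=\varphi(y)$ and $\varphi(s)=\varphi(t)$, which together with (II) gives $\varphi(x)=\varphi(y)=\varphi(t)=\varphi(s)$, the first alternative. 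If instead both are bichromatic, then $\varphi(x)$ and $\varphi(s)$ are both the color other than $\varphi(y)=\varphi(t)$ (using (II)), so $\varphi(x)=\varphi(s)\neq\varphi(y)=\varphi(t)$, the second alternative.

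First I would establish (I) by a parity count of other-color incidences at inner vertices. Writing $I$ for the set of inner vertices, summing over every $w\in I$ the number of neighbors of $w$ colored differently from $w$ gives exactly $c\,|I|$, since each inner vertex has precisely $c$ neighbors of the other color. In this sum each bichromatic edge with both endpoints in $I$ is counted twice, whereas each bichromatic edge incident with a pendant, necessarily $xy$ or $ts$, is counted exactly once (and there is no edge between the two pendants, which are nonadjacent). Consequently the number of bichromatic edges among $\{xy,ts\}$ has the same parity as $c\,|I|$. Since $|I|$ is divisible by $4$, in particular even, this number is even, hence $0$ or $2$, which is precisely statement (I).

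The hard part will be (II): showing that the two inner neighbors $y$ and $t$ of the pendants are forced to receive the same color. This is the genuine color-replication property of the bridge, and, as the authors warn in the overview, it demands an involved case analysis. I would attack it through the block structure of $H$ displayed in Figure~\ref{fig:k>l+1H}, in the spirit of the $(b,1)$-bridge of Proposition~\ref{t:ab-coloring-np-malaab}: one argues that the ``exactly $c$ other-color neighbors'' constraint, with $c$ small relative to the block degrees because $b\ge c+2$, confines each block to one of only a few admissible color patterns, and then propagates the color of $y$ block by block across $H$ to the color of $s$ (equivalently of $t$). The divisibility of $|I|$ by $4$ is what should make this propagation close up consistently and rule out the spurious colorings in which $y$ and $t$ would differ. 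Verifying the list of admissible per-block patterns and checking that they compose to force $\varphi(y)=\varphi(t)$ is, I expect, the technically heaviest step of the whole argument.
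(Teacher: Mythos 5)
Your reduction of the lemma to the two claims (I) and (II) is logically sound, and your parity proof of (I) is correct. But (II) --- which you yourself identify as carrying all the weight --- is never proved; what you offer in its place is a plan, and the plan is misdirected on several counts. First, the lemma is stated for an \emph{arbitrary} graph $H$ with two pendant vertices and all other vertices of degree $b+c$; the concrete bridge built from two copies of $K_{b,b}$ enters only in Lemma~\ref{l:existujeHink>l+1}, where it is needed solely to show that a valid coloring \emph{exists}. Moreover, Figure~\ref{fig:k>l+1H} depicts how copies of $H$ are attached to $G_1$ and $G_2$, not an internal block structure of $H$, so there is no block decomposition available at the level of this lemma. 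Second, your intended propagation ``of the color of $y$ \dots\ to the color of $s$ (equivalently of $t$)'' asserts something false: $s$ and $t$ are not interchangeable, since in the second alternative of the lemma $\varphi(s)\neq\varphi(t)$, and $\varphi(y)=\varphi(s)$ fails in exactly that alternative. Third, you have misread where the hard case analysis lives: the paper's warning about an involved analysis refers to the $b=c+1$ bridge of Lemma~\ref{l:Hink=l+1abcolor}; the present lemma ($b\ge c+2$) has a short global proof.

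The paper closes the gap you left with a counting argument that uses no structure of $H$ beyond the degree conditions. Let $\alpha$ and $\beta$ be the numbers of red and blue inner vertices. Counting bichromatic edges once from the red side and once from the blue side gives $\alpha c-\epsilon_r=\beta c-\epsilon_b$, where $\epsilon_r,\epsilon_b\in\{0,1,2\}$ are the contributions of the two pendant edges; hence $|\alpha-\beta|\,c\le 2$. This forces $\alpha=\beta$ (for $c=2$ the evenness of $\alpha+\beta$ excludes $|\alpha-\beta|=1$), and divisibility of $\alpha+\beta$ by $4$ then makes $\alpha$ even. Your claim (II) now follows from two applications of this bookkeeping. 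The configuration $\varphi(x)=\varphi(y)\neq\varphi(s)=\varphi(t)$ is killed by the handshake lemma applied to the subgraph induced by the color class of $x$: it would have one vertex of degree $1$ and $\alpha$ vertices of degree $b$, hence an odd number of odd-degree vertices whether $b$ is even or odd, since $\alpha$ is even. The crossed bichromatic configuration $\varphi(x)=\varphi(t)\neq\varphi(y)=\varphi(s)$ is killed because it would give $\{\epsilon_r,\epsilon_b\}=\{0,2\}$, i.e.\ $(\alpha-\beta)c=\pm 2$, contradicting $\alpha=\beta$. If you want to salvage your write-up, keep your glue logic and your (I) (though it is subsumed by the balance equation anyway) and replace the propagation plan by these two counting arguments; nothing about the internal structure of $H$ is needed.
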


\begin{figure}[h]
\begin{center}
\scalebox{0.9}{\includegraphics{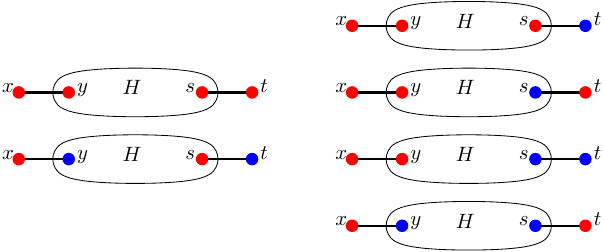}}
\end{center}
\caption{Feasible (in the left) and infeasible (in the right) red-blue colorings of a bridge graph $H$.}\label{fig:k>l+1Hfeasiblecolorings}
\end{figure}

\begin{proof}
Let $\alpha$ be the number of inner vertices that are colored red, and let $\beta$ be the number of inner vertices that are colored blue. Every red inner vertex has $c$ blue neighbors, and so $H$ has $\alpha c$ red-blue edges, with at most two of them being the pendant ones. Similarly, $H$ has  $\beta c$ red-blue edges, with at most two of them being the pendant ones. Hence 
$$\alpha c-\epsilon_r=\beta c-\epsilon_b$$
for some $\epsilon_r,\epsilon_b\in\{0,1,2\}$
(with some additional restrictions, e.g., $\epsilon_r,\epsilon_b$ cannot be both equal to 2, but that is not important). Therefore,
$$|(\alpha-\beta)c|\le 2.$$
If $c>2$, this immediately implies $\alpha=\beta$. If $c=2$, we might get $|\alpha-\beta|=1$, but then $\alpha$ and $\beta$ would be of different parities, contradicting the assumption of $\alpha+\beta$ being even. We conclude that $\alpha=\beta$, and that $\alpha$ is even.

Suppose $x$ and $y$ have the same color, say the red one. Then both $s$ and $t$ must be red as well, because
\begin{itemize}
\item $\varphi(s)=red,\varphi(t)=blue$ would yield $\alpha c-1=\beta c$, which is impossible,
\item $\varphi(s)=blue,\varphi(t)=red$ would yield $\alpha c=\beta c-1$, which is impossible, and
\item $\varphi(s)=\varphi(t)=blue$ would imply that the red subgraph of $H$ has an odd number of vertices of odd degree (either 1, if $b$ is even, or $\alpha+1$ if $b$ is odd), which is impossible by the well known Handshaking lemma.  
\end{itemize}  

Let $x$ and $y$ have different colors, say $x$ is red and $y$ is blue. Then $s$ and $t$ cannot have the same color by an argument symmetric to the one above. We cannot have $s$ blue and $t$ red, since $\alpha c=\beta c-2$ in such a case, which is not possible since $\alpha+\beta$ is divisible by 4. Hence $s$ must be red and $y$ blue. This concludes the proof of Lemma~\ref{l:Hink>l+1abcolor}. (See Figure~\ref{fig:k>l+1Hfeasiblecolorings} for an illustration of feasible and infeasible colorings of $H$.)
\end{proof}

\begin{lemma}\label{l:existujeHink>l+1}
For every $c\ge 2$ and $b\ge c+2$, there exists a bridge graph $H$ whose number of inner vertices is divisible by 4 and which allows a $(b,c)$-coloring such that all four vertices $x,y,s,t$ have the same color. 
\end{lemma}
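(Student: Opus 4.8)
The plan is to construct $H$ explicitly, together with the intended coloring, laying out the $b$-regular ``monochromatic'' parts and the $c$-regular ``bichromatic'' part on four equal groups of inner vertices in a way that keeps $H$ bipartite and simple.

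Concretely, I would take four pairwise disjoint sets of inner vertices $R_P,R_Q,B_P,B_Q$, each of size $n:=b$, declare the bipartition classes of $H$ to be $P:=R_P\cup B_P$ and $Q:=R_Q\cup B_Q$, color $R_P\cup R_Q$ red and $B_P\cup B_Q$ blue. On the red side I put a complete bipartite graph $K_{b,b}$ between $R_P$ and $R_Q$, and likewise a $K_{b,b}$ between $B_P$ and $B_Q$ on the blue side; these monochromatic subgraphs are $b$-regular and bipartite. For the bichromatic edges I add an arbitrary $c$-regular bipartite graph between $R_P$ and $B_Q$, and another between $R_Q$ and $B_P$ (such graphs exist since $c\le b=n$); all these cross edges respect the classes $P,Q$, so $H$ remains bipartite, and every inner vertex acquires exactly $c$ neighbors of the opposite color. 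Finally I pick $y\in R_P$ and $s\in R_Q$, delete the edge $ys$ (which exists, as the red part is complete), and attach two new degree-one vertices $x$ (adjacent to $y$) and $t$ (adjacent to $s$), both colored red; placing $x$ in $Q$ and $t$ in $P$ preserves bipartiteness.

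Then I would check the bookkeeping. Every inner vertex has degree $b+c$: a red inner vertex other than $y,s$ has $b$ red neighbors (from $K_{b,b}$) and $c$ blue neighbors; the vertices $y$ and $s$ each lost one red neighbor when $ys$ was deleted but regained it through the pendant $x$, resp.\ $t$, so they again have exactly $b$ red and $c$ blue neighbors; blue inner vertices are symmetric. Hence the exhibited coloring is a legitimate coloring of the type considered in Lemma~\ref{l:Hink>l+1abcolor}, and by construction $x,y,s,t$ are all red, i.e.\ share the same color. The number of inner vertices equals $|R_P|+|R_Q|+|B_P|+|B_Q|=4b$, which is divisible by $4$; moreover $H$ is simple, bipartite, has exactly the two pendants $x,t$ with all other vertices of degree $b+c$, and (choosing the regular bipartite graphs connected) is connected, as required of a bridge graph.

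The only genuine subtlety, and the step I would be most careful about, is reconciling the presence of monochromatic edges with the bipartiteness of $H$: a $b$-regular red subgraph is in general far from bipartite, so the whole purpose of splitting the red vertices into the two classes $R_P,R_Q$ (and blue into $B_P,B_Q$) is precisely to realize the $b$-regular red and blue subgraphs as bipartite copies of $K_{b,b}$, and to route every cross edge only between the $P$- and $Q$-sides. Once this layout is fixed, the remaining work — the degree accounting at the two modified vertices $y,s$ and the placement of the pendants $x,t$ in the correct bipartition classes — is entirely routine.
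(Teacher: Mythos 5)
Your construction is correct and is essentially the paper's own proof in different notation: the paper likewise takes two disjoint copies of $K_{b,b}$ (one per color class, with parts $A,B$ and $C,D$ playing the roles of your $R_P,R_Q$ and $B_P,B_Q$), joins them by $c$ disjoint perfect matchings between $A$ and $C$ and between $B$ and $D$ (your arbitrary $c$-regular bipartite cross graphs), deletes the red edge $ys$, and attaches the pendants $x,t$, coloring $A\cup B\cup\{x,t\}$ red. The degree bookkeeping, the count of $4b$ inner vertices, and the bipartiteness argument all match the paper's reasoning.
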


\begin{proof}
Take two disjoint copies of the complete bipartite graph $K_{b,b}$. Let the classes of bi-partition in one of them be $A$ and $B$, and the classes of bi-partition in the other one $C$ and $D$. Pick an edge $ys$ such that $y\in A$ and $s\in B$. Add $c$ disjoint perfect matchings between $A$ and $C$, and $c$ other disjoint perfect matching between $B$ and $D$. In this way we have obtained a $(b+c)$-regular bipartite graph with $4b$ vertices. Delete the edge $ys$ and add pendant edges $xy$ and $st$ with new extra vertices $x,t$ of degree 1 to obtain the desired bridge graph $H$. Indeed, coloring $A\cup B\cup\{x,t\}$ red and $C\cup D$ blue yields a $(b,c)$-coloring such that the 4 vertices $x,y,s,t$ get the same color. See an illustrative example in Figure~\ref{fig:existsHink>l+1}. 
\end{proof}

\begin{figure}[h]
\begin{center}
\scalebox{0.9}{\includegraphics{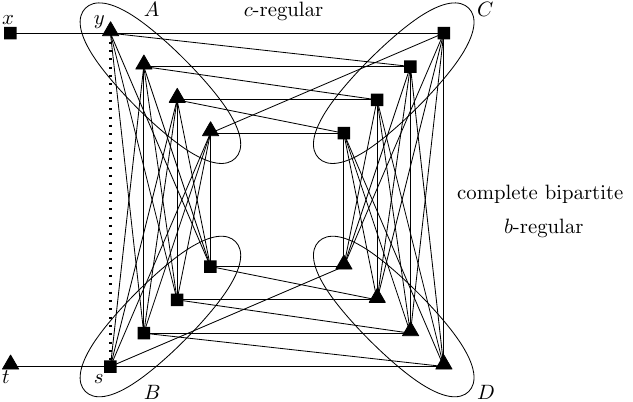}}
\end{center}
\caption{An example of the bridge graph $H$ for $b=4$ and $c=2$.}\label{fig:existsHink>l+1}
\end{figure}

Let us return to the proof of Proposition~\ref{p:k>l+1coloring}. Given a simple bipartite graph $G$, we construct $G'$ as described using the bridge graph $H$ from Lemma~\ref{l:existujeHink>l+1}. This $G'$ is simple, and since $H$ was created from a bipartite graph, $G'$ is bipartite as well. The proof of the proposition now follows from the following lemma.

\begin{lemma}
The graph $G'$ allows a $(b,c)$-coloring if and only if $G$ allows a $(1,c)$-coloring.
\end{lemma}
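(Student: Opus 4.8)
The plan is to prove the biconditional by transferring colorings across the gadget construction in both directions, relying crucially on the two lemmas just established. The graph $G'$ consists of two copies $G_1,G_2$ of the $(1+c)$-regular graph $G$, connected so that each vertex $u_1$ (resp.\ $u_2$) becomes the degree-$1$ endpoint $x_{u,i}$ (resp.\ $t_{u,i}$) of $b-1$ bridge copies $H_{u,i}$. Inside $G_1$ each vertex $u_1$ has degree $1+c$, and after identification it gains $b-1$ further edges, one to each neighbor $y_{u,i}$ of $x_{u,i}$; hence $u_1$ has total degree $(1+c)+(b-1)=b+c$ in $G'$, as required, and similarly for $u_2$. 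So $G'$ is $(b+c)$-regular, simple and bipartite.

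First I would prove the forward direction. Suppose $\varphi$ is a $(b,c)$-coloring of $G'$. Every inner vertex of every bridge copy $H_{u,i}$ has all its $b+c$ neighbors inside that copy, so the restriction of $\varphi$ to $H_{u,i}$ satisfies the hypotheses of Lemma~\ref{l:Hink>l+1abcolor}. By that lemma, the two endpoints $x_{u,i}=u_1$ and $t_{u,i}=u_2$ together with their bridge-neighbors $y_{u,i}$ and $s_{u,i}$ satisfy either all four equal, or $\varphi(x_{u,i})=\varphi(s_{u,i})\neq\varphi(y_{u,i})=\varphi(t_{u,i})$. In particular $\varphi(u_1)=\varphi(y_{u,i})$ in the first case and $\varphi(u_1)\neq\varphi(y_{u,i})$ in the second; but since $u_1$ has exactly $b$ same-colored neighbors in total and $b-1$ of the bridge neighbors $y_{u,i}$ must realize most of that count, a short counting argument forces all $b-1$ bridge neighbors $y_{u,i}$ of $u_1$ to share $u_1$'s color. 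Consequently each $u_1$ has exactly one neighbor of the opposite color remaining among its $G_1$-edges among a total of $c$ opposite-colored neighbors; tracking these counts shows the restriction of $\varphi$ to $G_1$ is a $(1,c)$-coloring of $G$. This gives the desired $(1,c)$-coloring.

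For the converse, given a $(1,c)$-coloring $\psi$ of $G$, I would copy it onto both $G_1$ and $G_2$ and then extend across each bridge. Place $\psi$-colors on $u_1$ and $u_2$ for all $u$. By Lemma~\ref{l:existujeHink>l+1}, each bridge graph $H$ admits a $(b,c)$-coloring in which all four of $x,y,s,t$ carry a single common color; using the color-swapping symmetry of the bipartite bridge (Observation~\ref{o:ab-ba-coloring} applied to the bridge) we may realize this common color to agree with the prescribed value at $x_{u,i}=u_1$ and $t_{u,i}=u_2$, which is consistent because in a $(1,c)$-coloring the vertices $u_1,u_2$ receive matching colors through the copied coloring. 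Extending each $H_{u,i}$ this way, every inner vertex already has its correct $(b,c)$-balance, and each former endpoint $u_1$ now has $b-1$ bridge-neighbors of its own color plus its $G_1$-neighbors; since $\psi$ is a $(1,c)$-coloring, $u_1$ has one same-colored and $c$ opposite-colored neighbors in $G_1$, for a total of $b$ same and $c$ opposite. Hence $G'$ is $(b,c)$-colored.

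The main obstacle I anticipate is the bookkeeping in the forward direction: Lemma~\ref{l:Hink>l+1abcolor} only pins down the relationship between the four distinguished vertices of a single bridge, so I must carefully aggregate the $b-1$ bridges at each $u_1$ and combine their constraints with the global degree condition to force exactly one opposite-colored $G_1$-neighbor. This is where the parity and counting have to be handled delicately, especially distinguishing the two cases permitted by the lemma, to guarantee that the induced coloring on $G_1$ is genuinely $(1,c)$ and not merely degree-consistent in some weaker sense.
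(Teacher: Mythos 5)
Your backward direction is correct and coincides with the paper's. The forward direction, however, has a genuine gap at exactly the point you yourself flag as delicate: the ``short counting argument'' you invoke does not exist as pure counting. Suppose $j$ of the $b-1$ bridges at a vertex $u$ are in the crossed case of Lemma~\ref{l:Hink>l+1abcolor} and $b-1-j$ are in the all-equal case. Then $u_1$ has $j$ opposite-colored and $b-1-j$ same-colored bridge neighbors, and the $(b,c)$-condition at $u_1$ only forces $j\le c$, with $u_1$ then having $j+1$ same-colored and $c-j$ opposite-colored neighbors inside $G_1$. These counts are internally consistent for every $0\le j\le c$ (note $u_1$ has $c+1$ neighbors in $G_1$), so local degree bookkeeping at $u_1$ can never force $j=0$; e.g.\ for $b=c+2$ the value $j=c$ satisfies all constraints, and the induced coloring of $G_1$ would then fail to be a $(1,c)$-coloring.

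The missing idea is that the two cases of Lemma~\ref{l:Hink>l+1abcolor} cannot mix at a vertex $u$: which case holds in $H_{u,i}$ is decided by whether $\varphi(u_1)=\varphi(u_2)$ (all-equal case) or $\varphi(u_1)\neq\varphi(u_2)$ (crossed case), and this condition is the same for every $i$ because $u_1$ and $u_2$ are shared by all $b-1$ bridges. Hence $j\in\{0,\,b-1\}$. The crossed possibility $j=b-1$ is then excluded by the standing hypothesis $b\ge c+2$ of Proposition~\ref{p:k>l+1coloring}: it would give $u_1$ at least $b-1\ge c+1>c$ opposite-colored neighbors, contradicting the $(b,c)$-condition. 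This inequality is essential and your write-up never invokes it --- indeed it is precisely why the case $b=c+1$ needs the separate, more intricate gadget of Proposition~\ref{p:k=l+1coloring}. Once you add the uniformity observation and the explicit use of $b\ge c+2$, your argument becomes the paper's proof.
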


\begin{proof}
Suppose $G'$ allows a $(b,c)$-coloring, say $\varphi$. Consider a vertex $u\in V(G)$. Lemma~\ref{l:Hink>l+1abcolor} implies that either 
\begin{itemize}
\item $\varphi(u_1)=\varphi(y_{u,i})=\varphi(s_{u,i})=\varphi(u_2)$ for all $i=1,2,\ldots,b-1$, or
\item $\varphi(u_1)=\varphi(s_{u,i})\neq\varphi(y_{u,i})=\varphi(u_2)$ for all $i=1,2,\ldots,b-1$.
\end{itemize}
But the latter would mean that $u_1$ has $b-1>c$ neighbors of the opposite color, which is too many. Hence every vertex $u_1$ has $b-1$ neighbors of its own color in the bridge graphs, and therefore the restriction of $\varphi$ to $G_1$ is a $(1,c)$-coloring of $G_1$ (which is isomorphic to $G$).

On the other hand, if $G$ allows a $(1,c)$-coloring, use the same coloring on $G_1$ and $G_2$ and color the bridges so that for every $u\in V(G)$, both $u_1$ and $u_2$ have all their $b-1$ neighbors in the bridge graphs colored with their own color. This is possible by Lemma~\ref{l:existujeHink>l+1}, and this gives a $(b,c)$-coloring of $G'$. 
\end{proof}

The lemma now implies Proposition~\ref{p:k>l+1coloring}.
\end{proof}

\begin{proposition}\label{p:k=l+1coloring}
For every $c\ge 2$, the {\sc $(c+1,c)$-Coloring}  problem is NP-complete even for simple bipartite graphs.
\end{proposition}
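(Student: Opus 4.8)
The plan is to follow the same two-copies-plus-bridges strategy used in Proposition~\ref{p:k>l+1coloring}, reducing from {\sc $(1,c)$-Coloring} (which is NP-complete on simple bipartite graphs for $c\ge2$ since $1+c\ge3$, by Corollary~\ref{c:ab-ba-coloring} together with Proposition~\ref{t:ab-coloring-np-malaab}). First I would observe that the architecture of the reduction is entirely reusable: given a simple bipartite $(1+c)$-regular graph $G$, I take two disjoint copies $G_1,G_2$ and, for each vertex $u\in V(G)$, attach $b-1=c$ copies of a bridge graph $H$ by identifying the pendant vertices $x_{u,i}$ with $u_1$ and $t_{u,i}$ with $u_2$. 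The two lemmas I need are exactly the $b=c+1$ analogues of Lemma~\ref{l:Hink>l+1abcolor} and Lemma~\ref{l:existujeHink>l+1}: namely that any admissible coloring of $H$ forces the two pendant vertices and their neighbors into the pattern $\varphi(x)=\varphi(y)=\varphi(s)=\varphi(t)$ or $\varphi(x)=\varphi(s)\neq\varphi(y)=\varphi(t)$, and that a bridge graph exists realizing the all-same-color pattern. Once these are in hand, the final equivalence (that $G'$ admits a $(c+1,c)$-coloring iff $G$ admits a $(1,c)$-coloring) is verbatim the same argument as before: the ``crossing'' pattern is excluded because it would force $u_1$ to have $b-1=c$ opposite-colored neighbors inside the bridges, leaving no room for its $c$ opposite-colored neighbors in $G_1$; this again pins the restriction to $G_1$ down to a $(1,c)$-coloring.

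The existence lemma should go through with a construction close to that of Lemma~\ref{l:existujeHink>l+1} (two copies of $K_{b,b}$ joined by $c$ matchings on each side, with the $ys$ edge deleted and pendants $xy,st$ added), giving a $(b+c)$-regular bipartite graph realizing the all-same pattern; I would only need to verify that the inner-vertex count remains divisible by $4$, adjusting by taking a larger or duplicated base if necessary. The genuinely delicate part, as the remark preceding this proposition already warns, is the forcing lemma for $b=c+1$. In the $b\ge c+2$ proof the key inequality $|(\alpha-\beta)c|\le 2$ (where $\alpha,\beta$ count red and blue inner vertices) immediately yields $\alpha=\beta$ for $c>2$, and parity handles $c=2$. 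That counting argument still produces $\alpha=\beta$ here, since it only used the pendant-edge bound and does not depend on the gap $b-c$. The subtlety is that when $b=c+1$ the degrees are so tightly balanced that the case analysis excluding the ``mixed'' colorings $(\varphi(s),\varphi(t))$ becomes more involved: the handshaking-parity step that ruled out $\varphi(s)=\varphi(t)\neq\varphi(x)$ must be re-examined because the red subgraph's degree parity now hinges on $b=c+1$ in a way that is no longer slack.

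Accordingly, the heart of the proof will be a careful re-derivation of the forcing lemma. I would set up the same two invariants—an edge-counting identity $\alpha c-\epsilon_r=\beta c-\epsilon_b$ with $\epsilon_r,\epsilon_b\in\{0,1,2\}$ forcing $\alpha=\beta$ even, and a handshaking argument on the monochromatic subgraphs—and then enumerate the four color assignments to the pendant-neighbor pairs $(y,s)$ relative to $(x,t)$. For each forbidden pattern I expect to exhibit either an off-by-one violation of the edge-count identity (ruling out the cases producing $\alpha c\pm1=\beta c$) or an odd-degree-count contradiction via the Handshaking Lemma applied to the red or blue induced subgraph. The main obstacle I anticipate is precisely isolating, for $b=c+1$, which of these contradictions survives: with the degree budget so tight, some elimination steps that were automatic for $b\ge c+2$ will instead require tracking the exact parity of $b$ and the precise count of odd-degree vertices in the monochromatic subgraph, so I would handle the even-$b$ and odd-$b$ subcases separately and lean on the divisibility-by-$4$ hypothesis on the inner vertices to close the remaining $\alpha c=\beta c-2$ possibility. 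Once the forcing lemma is established, assembling the reduction and verifying simplicity and bipartiteness of $G'$ is routine, mirroring the end of Proposition~\ref{p:k>l+1coloring}.
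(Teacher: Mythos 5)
Your plan collapses exactly at the step you describe as ``verbatim the same argument as before,'' and this is a genuine gap, not a detail. In Proposition~\ref{p:k>l+1coloring} the crossing pattern $\varphi(x)=\varphi(s)\neq\varphi(y)=\varphi(t)$ is killed because $u_1$ would then have $b-1\ge c+1$ opposite-colored neighbors inside the bridges, strictly exceeding its quota of $c$. When $b=c+1$ the crossing pattern gives $u_1$ exactly $b-1=c$ opposite-colored bridge-neighbors, which \emph{exactly fills} the quota; contrary to what you write, $u_1$ does not need any opposite-colored neighbors in $G_1$ (the constraint is on the total count), so there is no contradiction --- $u_1$ simply needs all $c+1$ of its $G_1$-neighbors to share its color. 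Hence the forward direction of your reduction fails: for example, coloring all of $G_1$ red, all of $G_2$ blue, and every bridge in the crossing pattern satisfies every local constraint (each $u_1$ gets $c+1$ same-colored neighbors in $G_1$ and $c$ opposite-colored ones in the bridges), so $G'$ could admit a $(c+1,c)$-coloring even when $G$ has no $(1,c)$-coloring, as long as the bridge realizes the crossing pattern. Your proposed tools cannot rule out that realization: the edge-counting identity, the handshaking argument, and divisibility by $4$ are precisely the ingredients of Lemma~\ref{l:Hink>l+1abcolor}, whose conclusion explicitly \emph{admits} crossing as one of the two outcomes. (You also misplace the difficulty: the counting part of that lemma goes through for $b=c+1$ essentially unchanged, as you suspected; the problem is that a same-or-crossing conclusion is too weak for $b=c+1$ no matter how it is proved.)

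This is why the paper abandons the pendant-connector architecture for this case and uses a structurally different bridge: a graph with $4(c+1)$ inner vertices of degree $2c+1$ (two copies of $K_{c+1,c+1}$ joined by cocktail-party graphs) and two connector vertices of degree $c$, of which exactly \emph{one} copy is attached per vertex $u$ of $G$, its connectors being identified with $u_1$ and $u_2$. The corresponding forcing statement, Lemma~\ref{l:Hink=l+1abcolor}, is strictly stronger than same-or-crossing: every admissible coloring makes the connectors and all $2c$ of their bridge-neighbors monochromatic, so the crossing pattern is structurally impossible and the restriction of any $(c+1,c)$-coloring of $G'$ to $G_1$ is forced to have exactly one same-colored neighbor per vertex, i.e.\ to be a $(1,c)$-coloring. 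Establishing this costs a lengthy case analysis tied to the specific structure of that bridge (this is the ``much more involved'' analysis the paper alludes to); it is not obtainable from counting and parity invariants. To repair your proof you would have to either exhibit a pendant-connector bridge in which the crossing pattern is provably unrealizable --- proved by structural arguments your outline does not contain --- or switch to a construction of the paper's type.
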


\begin{proof}
We will prove that {\sc $(1,c)$-Coloring} polynomially reduces to {\sc $(c+1,c)$-Coloring} for simple bipartite inputs. Given a simple bipartite $(1+c)$-regular graph $G$ as input of  {\sc $(1,c)$-Coloring}, construct a graph $G'$ by taking two disjoint copies $G_1,G_2$ of $G$ and connecting them by ``bridges'', similarly as in the proof of Proposition~\ref{p:k>l+1coloring}. But this time we will describe the bridge graph $H$ explicitly from the very beginning of the proof. It has $4(c+1)$ ``inner'' vertices of degree $2c+1$ and two ``connector'' vertices of degree $c$. The inner part of $H$ is created from two copies of the complete bipartite graph $K_{c+1,c+1}$ whose classes of bi-partition are connected by cocktail-party graphs (i.e., complete bipartite graphs minus a perfect matching), and in one of the copies $c$ independent edges are deleted and replaced by edges leading to the connector vertices. The graph is illustrated in Figure~\ref{fig:existsHink=l+1}, but since we will heavily rely on its structure in the proof of its properties, we also describe it formally:

\begin{figure}[t]
\begin{center}
\scalebox{0.9}{\includegraphics{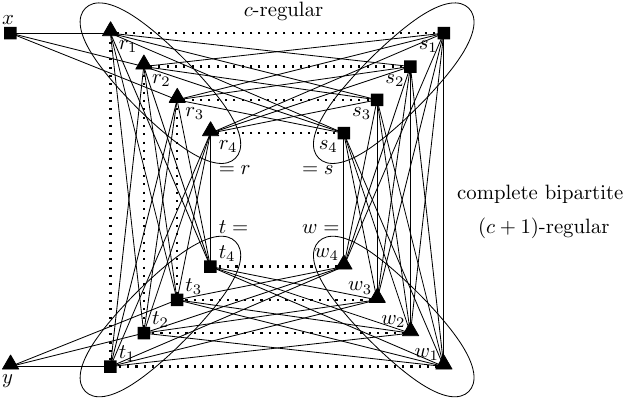}}
\end{center}
\caption{An example of the bridge graph $H$ for $c=3$.}\label{fig:existsHink=l+1}
\end{figure}

\begin{equation}
\begin{aligned}
V(H) = & \{x,y\}\cup\bigcup_{i=1}^{c+1}\{r_i,s_i,t_i,w_i\}, \\
E(H)=  & \bigcup_{i=1}^{c}\{xr_i,yt_i\} \cup \Bigg(\bigcup_{i,j=1}^{c+1}\{r_it_j\}\setminus\bigcup_{i=1}^{c}\{r_ir_i\}\Bigg) \\
       & \cup \bigcup_{i,j=1}^{c+1}\{s_iw_j\} \cup \Bigg(\bigcup_{i,j=1}^{c+1}\{r_is_j,t_iw_j\}\setminus\bigcup_{i=1}^{c+1}\{r_is_i,t_iw_i\}\Bigg)
\end{aligned}
\end{equation}
where for the sake of brevity, but also to stress their special roles, we write $r=r_{c+1},  s=s_{c+1}, t=t_{c+1}$ and $w=w_{c+1}$.

In the construction of $G'$, for every $u\in V(G)$, let the companion vertices in $G_1$ and $G_2$ which are copies of $u$ be again denoted by $u_1$ and $u_2$, respectively. We take a copy $H_u$ of $H$ and unify its connector vertices with $u_1$ and $u_2$. See an illustrative example in Figure~\ref{fig:k=l+1}. Note finally, that $G'$ is a bipartite graph, since $H$ is bipartite and the distance between $x$ and $y$ in $H$ is odd.

\begin{figure}
\begin{center}
\scalebox{0.9}{\includegraphics{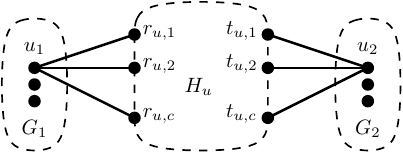}}
\end{center}
\caption{An illustration to the construction of $G'$.}\label{fig:k=l+1}
\end{figure}

\begin{lemma}\label{l:Hink=l+1abcolor}
Let $\varphi:V(H)\longrightarrow\{red,blue\}$ be a red-blue coloring of the vertices of $H$ such that every inner vertex has exactly $c+1$ neighbors of its own color (and hence $c$ neighbors of the other color). Then $\varphi(x)=\varphi(r_i)=\varphi(t_i)=\varphi(y)$ for all $i=1,2,\ldots,c$.
\end{lemma}

\begin{proof}
Suppose $\varphi(x)=red$. We will prove the result by a  case analysis. In the illustrative Figure~\ref{fig:caseanalysisHink=l+1}, the assumptions of the cases are marked with dark red and blue, the colorings that are derived from them by light red and blue, and the vertices that cause contradictions are stressed by arrows.

\begin{figure}[h]
\begin{center}
\scalebox{0.85}{\includegraphics[page=3]{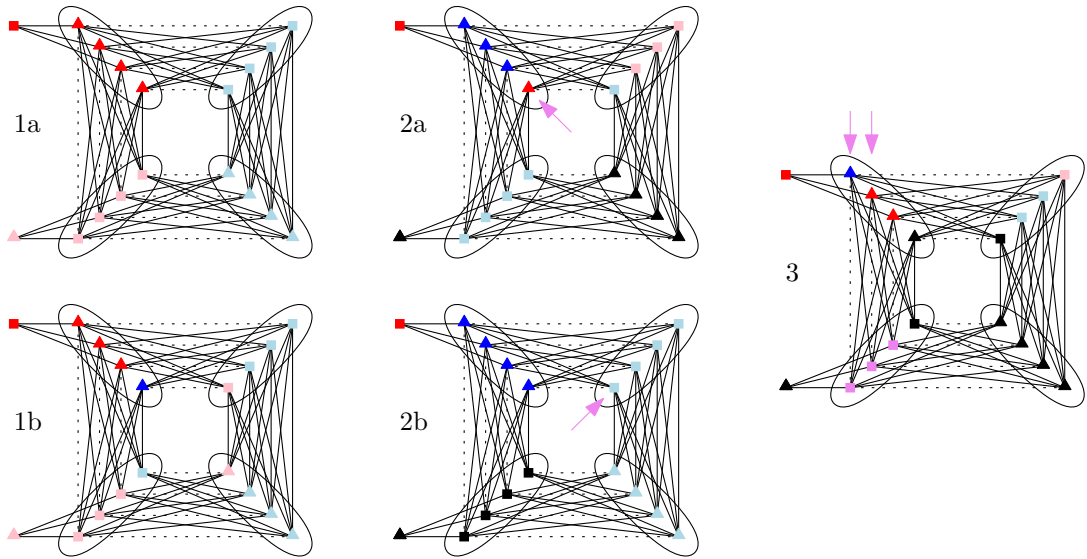}}
\end{center}
\caption{An illustration to the case analysis of Lemma~\ref{l:Hink=l+1abcolor}.}\label{fig:caseanalysisHink=l+1}
\end{figure}

\medskip\noindent
\emph{\textbf{Case 1} $\varphi(r_i)=red$ for all $i=1,2,\ldots,c$\\
\textbf{Subcase 1a} $\varphi(r)=red$}\\
In this case any two vertices $s_i,s_j$ will end up with the same number of red neighbors, regardless of the colors on $w_1,\ldots,w_{c+1}$. Therefore all $s_i$'s must have the same color. Every vertex $w_i$ then already has $c+1$ neighbors of this color among $s_i$'s, and thus all vertices $w_i$ have the same color as the $s_i$'s. If this color were red, every $s_i$ would have $2c+1$ red neighbors and no blue ones. Hence $\varphi(s_i)=\varphi(w_i)=blue$ for all $i=1,2,\ldots,c+1$. Then each $w_i$ has already $c+1$ neighbors of its own color, and so all the other neighbors (i.e., the vertices $t_i$, $i=1,2,\ldots,c+1$) are red. Now $t_1$ has  only $c$ red neighbors among the $r_i$'s, and therefore $y$ must be red as well.\\
\emph{\textbf{Subcase 1b} $\varphi(a)=blue$}\\
In this case, each $s_i$, $i=1,2,\ldots,c$ will end up seeing less red neighbors than $s$, regardless of the colors on $w_i$'s ($s$ has a red neighbor $r_i$, while $r_i$ is not a neighbor of $s_i$, and the private neighbor $r$ of $s_i$ is blue). Hence $s$ must be red and all $s_i$, $i=1,2,\ldots,c$ are blue. To supply the $s_i$'s with correct numbers of red neighbors, exactly one of the $w_i$'s must be red, all others are blue. The red one has just one red neighbor among $s_i$'s, and hence at least $c$ of the $t_i$'s are red. The blue vertices among $w_i$'s have $c$ blue neighbors among $s_i$'s, and so at least one of the $t_i$'s is blue. It follows that $\varphi(w_i)\neq\varphi(t_i)$ for all $i=1,2,\ldots,c+1$. Since every $r_i$, $i=1,2,\ldots,c$ has two red neighbors $x$ and $s$, it should have only (and exactly) $c-2$ red neighbors among $t_i$'s, and hence $\varphi(t_i)=\varphi(r_i)=red$ for $i=1,2,\ldots,c$. Then $\varphi(t)=blue$. Since $t_1$ has so far $c$ red neighbors ($c-1$ among $r_i$'s and one among $w_i$'s), $y$ must be red.

\medskip\noindent
\emph{\textbf{Case 2} $\varphi(r_i)=blue$ for all $i=1,2,\ldots,c$}\\
\emph{\textbf{Subcase 2a} $\varphi(r)=red$}\\
Any two $s_i,s_j$, $i,j=1,2,\ldots,c$ will end up with the same number of red neighbors (regardless the coloring of the $w_i$'s), and hence all $s_i$, $i=1,2,\ldots,c$ have the same color. Since $r$ is not a neighbor of $s$, $s$ will end up with less red neighbors than $s_1$. Therefore, $\varphi(s_i)=red$ for $i=1,2,\ldots,c$, and $\varphi(s)=blue$. Since $x$ is red, every $r_i$, $i=1,2,\ldots,c$ must have $c$ blue neighbors among the $t_i$'s, and because $c\ge 2$, it follows that all $t_i$'s (including $t=t_{c+1}$) are blue. 
But then the red vertex $r$ has too many ($c+1$) blue neighbors, a contradiction.
\\
\emph{\textbf{Subcase 2b} $\varphi(a)=blue$}\\
Any two $s_i$ vertices will end up with the same number of red neighbors, and hence all $s_i$'s (including $s$) have the same color, and this color must be blue, since a blue vertex $r_1$ would have $c+1$ red neighbors otherwise. Now every $w_i$ has already $c+1$ blue neighbors (the $s_i$'s), and thus all $w_i$'s are blue. But this causes a contradiction, since now each $s_i$ has all $2c+1$  neighbors blue.

\medskip\noindent
\emph{\textbf{Case 3} At least one of the $r_i$'s for $i=1,2,\ldots,c$ is red and at least one of them is blue}\\
Consider $i$ and $j$ such that $\varphi(r_i)=\varphi(r_j)$. Regardless the coloring of $w_i$'s, the vertices $s_i$ and $s_j$ will end up with the same number of red neighbors, and hence $\varphi(s_i)=\varphi(s_j)$. If, on the other hand, $\varphi(r_i)\neq\varphi(r_j)$, say $\varphi(r_i)=red$ and $\varphi(r_j)=blue$, then $s_i$ will end up with less red neighbors than $s_j$, and hence $\varphi(s_i)=blue$ and $\varphi(s_j)=red$. We conclude that for every $i=1,2,\ldots,c+1$, $r_i$ and $s_i$ get different colors.

Now consider two vertices $t_i,t_j$, $i,j=1,2,\ldots,c$. If $\varphi(r_i)=\varphi(r_j)$, then $r_i$ and $r_j$ have the same number of red neighbors among $\{x\}\cup\{s_1,s_2,\ldots,s_{c+1}\}\cup(\{t_1,t_2,\ldots,t_{c+1}\}\setminus\{t_i,t_j\})$. In order to end up with the same number of red neighbors in total, it must be $\varphi(t_i)=\varphi(t_j)$. If $r_i$ and $r_j$ got different colors, say $\varphi(r_i)=red$ and $\varphi(r_j)=blue$, then among $\{x\}\cup\{s_1,s_2,\ldots,s_{c+1}\}\cup(\{t_1,t_2,\ldots,t_{c+1}\}\setminus\{t_i,t_j\})$, $r_i$ has one more red neighbors than $r_j$. But the same difference should apply to the total number of red neighbors of $r_i$ and $r_j$, and hence $\varphi(t_i)=\varphi(t_j)$. We conclude that all vertices $t_j,j=1,2,\ldots,c$ have the same color. Since the graph $H$ is symmetric, this is either Case 1 or Case 2 from the standpoint of the $t_i$'s. These cases have already been treated and either they lead to a contradiction, or they require that all vertices $r_i$, $i=1,2,\ldots,\ell$ get the same color. Which contradicts the assumption of Case 3.   
\end{proof}

To conclude the proof of Proposition~\ref{p:k=l+1coloring}, it only remains to prove the following lemma.

\begin{lemma}
The graph $G'$ allows a $(c+1,c)$-coloring if and only if $G$ allows a $(1,c)$-coloring.
\end{lemma}

\begin{proof}
Suppose $\varphi$ is a $(c+1,c)$-coloring of $G'$. It follows from Lemma~\ref{l:Hink=l+1abcolor} that every vertex $u_1\in V(G_1)$ has $c$ neighbors of its own color in the corresponding bridge $H_u$, and thus the restriction of $\varphi$ to $G_1$ is a $(1,c)$-coloring of $G_1$ (which is isomorphic to $G$).

If $G$ allows a $(1,c)$-coloring, use it on both $G_1$ and $G_2$ and color the bridges so that for every $u\in V(G)$, the $r_i$ and $t_i$ vertices of $H_u$ get the same color as $u$ and the vertices $s_i$ and $w_i$ of $H_u$ get the opposite color. This is a $(c+1,c)$-coloring of $G'$.
\end{proof}

This concludes the proof of Proposition~\ref{p:k=l+1coloring}.
\end{proof}

\begin{proposition}\label{t:aa-coloring-np}
For every $b\ge 2$, the {\sc $(b,b)$-Coloring} problem is NP-complete even for simple bipartite graphs. 
\end{proposition}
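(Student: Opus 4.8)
The plan is to reduce from a balanced variant of satisfiability, matching the fact that a vertex of degree $2b$ in a $(b,b)$-coloring sees \emph{exactly} $b$ neighbors of each color. Concretely, I would introduce the problem \textsc{$b$-in-$2b$ SAT}: given a monotone (negation-free) CNF formula in which every clause contains exactly $2b$ distinct variables, decide whether there is a truth assignment making \emph{exactly} $b$ literals true in every clause. As with \textsc{NAE-3-SAT} used in Lemma~\ref{t:ab-coloring-np-21}, this relation is invariant under global complementation (if exactly $b$ of $2b$ are true, then exactly $b$ are false), which matches the color-swap symmetry of $(b,b)$-colorings. The first step is to prove that this problem is NP-complete for $b\ge 2$ and, moreover, remains so when each variable occurs only a bounded number of times; the bounded-occurrence version is what lets us control degrees. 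NP-completeness can be obtained either through Schaefer's dichotomy (the relation $\sum_i x_i = b$ on $2b$ variables is for $b\ge 2$ neither affine, bijunctive, Horn, dual-Horn, nor $0$/$1$-valid) or by a direct reduction from \textsc{NAE-3-SAT}, with the occurrence bound enforced by the standard variable-splitting trick.

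The translation into a coloring instance then uses three kinds of blocks, in the spirit of Lemma~\ref{t:ab-coloring-np-21}. A \emph{clause block} is a single vertex $c_C$ of degree $2b$ joined to the $2b$ occurrence vertices of the variables of $C$; since $c_C$ must have exactly $b$ neighbors of each color, and since we fix the convention ``red $=$ true'', this vertex enforces precisely that $C$ has exactly $b$ true variables. Note that the color of $c_C$ itself is irrelevant to this constraint, again because the count is symmetric. A \emph{variable block} for $x$ produces one occurrence (``port'') vertex per occurrence of $x$, forces all of these ports to receive the same color (the truth value of $x$), and supplies each port with its remaining $2b-1$ incident edges so that the port ends up balanced. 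Finally, a \emph{completion block}, analogous to the use of $F\times K_2$ in Lemma~\ref{t:ab-coloring-np-21}, absorbs all leftover half-edges and raises every degree to exactly $2b$ while preserving bipartiteness; as these auxiliary pieces only need to be flexibly $(b,b)$-colorable and sit on their own side of the bipartition, they do not create spurious constraints.

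Given these blocks, correctness should be routine in both directions: a balanced assignment colors the ports by truth values, extends over the variable and completion blocks, and satisfies every clause vertex, while conversely any $(b,b)$-coloring restricts on the ports to a balanced assignment. To stay bipartite I would place all occurrence vertices on one side and clause and auxiliary vertices on the other, using even-length connections inside the variable blocks so that the ports of one variable genuinely lie on the same side and can be forced equal. By Corollary~\ref{c:ab-ba-coloring} the asymmetry between the two sides is immaterial, and by Observation~\ref{t:ab-coloring-bipartite-cover} together with Theorem~\ref{p:dob3} the $(b,b)$-coloring hardness transfers to the covering problem for $W(k,m,\ell,p,q)$ in the regular case $k+2m=2p+q$.

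The hard part will be the construction of the variable block, i.e.\ a robust \emph{equality wire} for the symmetric regime $c=b$. The monochromaticity-forcing arguments used earlier — for instance that a copy of $K_{b,b}$ is forced monochromatic in a $(b,1)$-coloring, as in the proof of Proposition~\ref{t:ab-coloring-np-malaab} — break down precisely because a vertex now tolerates $b$ opposite-colored neighbors, so a new mechanism is required. My plan is to build the wire from a primitive \emph{inequality link} forcing its two endpoints to receive opposite colors, and then to obtain equality by composing two such links in series; an inequality link can in turn be realized by a degree-$2b$ ``test'' vertex whose other $2b-2$ neighbors are pre-arranged into $b-1$ opposite-colored pairs, so that the balance condition at the test vertex forces its two endpoints to contribute exactly one unit of each color. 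Making this bootstrap non-circular — producing the first opposite-colored pairs without presupposing the gadget one is trying to build — is exactly where a parity/counting argument of the flavor of Lemma~\ref{l:Hink>l+1abcolor} (counting bichromatic edges and invoking the handshake lemma on a monochromatic subgraph) will be needed, and checking that the whole variable block admits \emph{both} monochromatic colorings while forbidding every mixed one is the most delicate and technical step of the proof.
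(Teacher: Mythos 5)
Your high-level plan coincides with the paper's actual proof: the paper also reduces from monotone exactly-$b$-in-$2b$ satisfiability with a bounded number of occurrences per variable (the problem ($k$-in-$2k$)-SAT$_q$, already known to be NP-complete for $k\ge 2$, $q\ge 3$ by a result of Kratochv\'{\i}l, so your first step needs no new proof), it uses clause gadgets that exploit the fact that a degree-$2b$ vertex must see exactly $b$ neighbors of each color (the paper uses $K_{b,2b}$ rather than a single clause vertex, and duplicates each clause gadget to keep the number of occurrences of every variable even), and it hinges on a variable gadget forcing all occurrence vertices of a variable to be monochromatic. However, your proof has a genuine gap exactly at the point you yourself flag as ``the most delicate and technical step'': the equality wire for the symmetric regime $c=b$ is never constructed. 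Your proposed mechanism is circular -- the ``inequality link'' needs a test vertex whose other $2b-2$ neighbors already form $b-1$ opposite-colored pairs, but producing an opposite-colored pair is precisely the inequality gadget you are trying to build. Since every other known monochromaticity-forcing argument in the paper (e.g.\ forced monochromatic $K_{b,b}$'s in the $(b,1)$ case) demonstrably fails when a vertex tolerates $b$ opposite-colored neighbors, the reduction does not go through without this gadget; what remains is a correct skeleton with its load-bearing component missing.

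For comparison, the paper resolves this bootstrap problem not by composing inequality links but by a direct counting construction. The variable gadget is built from copies of $K_{2b+1,2b+1}$ with parts $u_1,\dots,u_{2b+1}$ and $v_1,\dots,v_{2b+1}$, from which the perfect matching $u_iv_i$ and the edges $u_iv_{i+b}$ ($2\le i\le b+1$) are deleted. The key observation is that $v_1$ is adjacent to all of $u_2,\dots,u_{2b+1}$, so in any $(b,b)$-coloring exactly $b$ of these $2b$ vertices are red and $b$ are blue; then for any $i,j\in\{2,\dots,b+1\}$ with $u_i$ red and $u_j$ blue, the vertex $v_i$ (adjacent to everything except $u_i$) forces $u_1$ to be red while $v_j$ forces $u_1$ to be blue -- a contradiction. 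Hence $u_2,\dots,u_{b+1}$ are monochromatic, and chaining such blocks through connector vertices propagates one color to all $2b+2$ occurrence vertices of a variable. The lesson is that in the balanced regime one should not try to force a single vertex's color directly, but instead use a common neighbor of a full set of $2b$ vertices to force a global balance condition, and then use ``almost-common'' neighbors (differing in one non-edge) to play balanced sets against each other. If you can replace your circular inequality-link idea with a gadget of this flavor (and verify it admits both monochromatic colorings, as the paper does explicitly), the rest of your argument -- clause vertices, completion blocks, bipartiteness via sides of the bipartition, and the transfer to $W(k,m,\ell,p,q)$-\textsc{Cover} via Observation~\ref{t:ab-coloring-bipartite-cover} -- would indeed be routine.
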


\begin{proof}
We will reduce from the following problem.
\computationproblem
{\sc ($k$-in-$2k$)-SAT$_q$}
{A formula $\phi$ with clauses $C_1, \ldots, C_{m}$ in CNF without 
negations, each $C_i$ is a disjunction of exactly $2k$ distinct literals and every variable occurs exactly $q$ times.}
{Does there exist a satisfying assignment of $\phi$ such that exactly $k$ literals are true in each $C_i$?}

The problem {\sc ($k$-in-$2k$)-SAT$_q$} was proved to be NP-complete by Kra\-to\-chvíl~\cite{kratochvil2003complexity} for every $k \geq 2, q \geq 3$.

Let $\phi$ 
 be an instance of {\sc ($b$-in-$2b$)-SAT$_q$}, $b \geq 2$, with each variable occurring $q=b+1$ times. Let $C_1, \ldots, C_{m}$ be the clauses of $\phi$.

Our clause gadget is a complete bipartite graph $K_{b, 2b}$. The vertices in the bigger part correspond to variables. 
More formally, for every variable $x$ occurring in  a clause $C_i$, the clause gadget has a vertex $y^i_x$ in its bigger part. To 
make
 sure that each variable has an even number of occurrences, we will duplicate each clause gadget 
and we will refer to these copies as the \emph{left} and \emph{right} ones, with their $y$ vertices being denoted by $y^{i,l}_x$ and $y^{i,r}_x$, respectively.  

For each variable $x$, we will construct a variable gadget $V_{b}$ in the following way. Take complete bipartite graph $K_{2b+1,2b+1}$ and denote its vertices in one part as $u_1, \ldots, u_{2b+1}$ and in the other part as $v_1, \ldots, v_{2b+1}$. Remove the edges $u_iv_i$ for each $1 \leq i \leq 2b+1$ and the edges $u_iv_{i+b}$ for each $2 \leq i \leq b+1$. Take two copies $K_1, K_2$ of the resulting graph and add a new vertex connected to
$v_{b+2}, 
\ldots, v_{2b+1}$ in $K_1$ and $u_2, \ldots, u_{b+1}$ in $K_2$. 

Add a new vertex connected to $u_2, \ldots, u_{b+1}$ in $K_1$
(this vertex will be called the \emph{left} vertex)
and add a new vertex connected to 
$v_{b+2},
\ldots, v_{2b+1}$ in $K_2$ (called
the \emph{right} one). Take $b+1$ disjoint copies of this graph and add $2b+2$ new vertices $x_1,\ldots,x_{2b+2}$ which shall correspond to the occurrences of the variable $x$. We shall call $x_1,\ldots,x_{b+1}$ 
the
\emph{left occurrences} of $x$ and $x_{b+2},\ldots,x_{2b+2}$ 
the \emph{right occurrences} of $x$. 

Now we shall insert edges between the left occurrences of $x$  and the left
vertices so that they induce a $b$-regular bipartite graph with one part
being $x_1,\ldots,x_{b+1}$ and the second one being the left vertices. An
analogous construction will be done with $x_{b+2},\ldots,x_{2b+2}$ and the
right vertices. See Figure~\ref{fig:vargadget} for an example.

\begin{figure}
\centering
\includegraphics[scale=0.9]{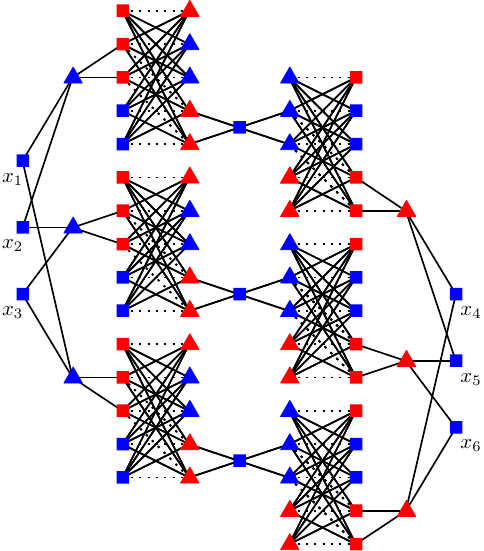}
\caption{A variable gadget $V_{2}$ for variable $x$ with a
$(b,b)$-coloring 
corresponding to valuation $\pi(x)$ = true.}
\label{fig:vargadget}
\end{figure}

To complete the construction, in the left copy of each clause gadget, we
identify each vertex of the part of the size $2b$ with the respective left
occurrences of the variable $x$ and in the right copy of each clause gadget,
we identify each vertex of the part of the size $2b$ with the respective
right occurrences of the variable $x$. Formally, if $C_i$ is the $j$-th
clause containing the variable $x$, we identify $y^{i,l}_x$ with $x_j$ and
$y^{i,r}_x$ with $x_{b+1+j}$. The resulting graph shall be called $G$. 

We claim that the formula $\phi$ is satisfiable if and only if $G$ has a $
(b,b)$-coloring.

First suppose that $\phi$ is satisfiable and take some satisfying assignment
$\pi$. We will construct a suitable coloring in the following way. For a
variable $x$, if $\pi(x)= $ true, then color $x_1,\ldots,x_{2b+2}$ by blue
color and otherwise, color all $x_1,\ldots,x_{2b+2}$ by red color. Color all
vertices in the smaller parts of the left copies of clause gadgets by red
color and all vertices in the smaller parts of the right copies of clause
gadgets by blue color. In the variable gadgets, vertices of one class of
bi-partition will be colored the same regardless the value of the
corresponding variable while the coloring of the the other class of
bi-partition will depend on its value. The left vertices
(connecting $x_1,\ldots,x_{b+1}$ to $K_1$) will be all colored blue, the
right vertices (connecting $x_{b+2},\ldots,x_{2b+2}$ to $K_2$) will be all
colored red. The $v_i$'s of $K_1$'s will always be colored so that $v_1$ and
$v_{b+2},\ldots,v_{2b+1}$ are red and $v_{2},\ldots,v_{b+1}$ are blue, the
$u_i$'s of $K_2$'s will always be colored so that  $u_1,\ldots,u_{b+1}$ are
blue and $u_{b+2},\ldots,u_{2b+1}$ are red. In the other class of
bi-partition, if $\pi(x)=$ true, then on top of all the occurrences
$x_1,\ldots, x_{2b+2}$, also all the "middle" vertices connecting $K_1$'s to
$K_2$'s, the vertices $u_{b+2},\ldots,u_{2b+1}$ in $K_1$'s and the vertices
$v_2,\ldots,v_{b+1}$ in $K_2$'s will be colored blue, while the vertices
$u_1,\ldots,u_{b+1}$ of $K_1$'s and the vertices $v_1,v_{b+2},\ldots,v_
{2b+1}$ in $K_2$'s will be colored red. If $\pi(x)=$ false, the colors of the
vertices in this class of bipartition will be swapped. See an example in
the Figure~\ref{fig:vargadget} for a variable evaluated to true. Since in
every clause, there are exactly $b$ variables set to true, all vertices
in the smaller parts of clause gadgets have exactly $b$ red and exactly $b$
blue neighbors. It can be shown by a detailed case analysis that the same
holds for all vertices, and so this is a $(b,b)$-coloring of $G$.

Suppose that $G$ has a $(b,b)$-coloring, and fix one such coloring. For a
variable $x$, we set $x$ to be true if all $x_1,\ldots,x_{2b+2}$ are colored
by blue color and we set it to be false if all $x_1,\ldots,x_{2b+2}$ are
colored by red color. We need to prove that such assignment always exists and
that it is a satisfying assignment. 

First we prove that in every $(b,b)$-coloring either all of $x_1,\ldots,x_
{2b+2}$ are colored blue or all of $x_1,\ldots,x_{2b+2}$ are colored red.
Recall the subgraph $K_1$ of a variable gadget with vertices $u_1, \ldots, u_
{2b+1}$ in one part and $v_1, \ldots, v_{2b+1}$ in the other part.

We claim that in every $(b,b)$-coloring of $V_b$ restricted to some copy of
$K_1$ and its two adjacent vertices, the vertices $u_2, \ldots, u_{b+1}$ are
either all red or all blue. Suppose for a contradiction that in some $
(b,b)$-coloring there exist indices $i,j \in \{2, \ldots, b+1\}$ such that
$u_i$ is colored by red and $u_j$ is colored by blue. Since $v_1$ is adjacent
to all $u_2, \ldots, u_{2b+1}$, exactly $b$ of them are colored red and
exactly $b$ of them are colored blue. Since $v_i$ is not adjacent to $u_i$,
we need to color $u_1$ by red. However, since $v_j$ is not adjacent to $u_j$,
we have to color $u_1$ by blue, a contradiction. 

Suppose without loss of generality that all $u_2, \ldots, u_{b+1}$ are blue.
As argued above, all $u_{b+2},\ldots,u_{2b+1}$ are then red. All of them are
neighbors of $v_2$, and hence $u_1$ is blue. Let $w$ be the vertex outside of
$K_1$ adjacent to $ v_{b+2},
 \ldots, v_{2b+1}$ in $K_1$. Since $v_{2b+1}$ has only $b-1$ red neighbors in
$K_1$, $w$ must be red. Similar arguments apply to $K_2$. Thus,
$u_2, \ldots, u_{b+1}$ in $K_1$ and $v_{b+1}, \ldots, v_{2b+1}$ in $K_2$
always have the same color. Then all $b$ occurrences of the variable
adjacent to the left vertex of $K_1$ and all $b$ occurrences adjacent to
the the right vertex of $K_2$ get the same color. Since $b\ge 2$, it
follows from the construction between the occurrences and variable gadgets
that all occurrences of the variable have the same color.

It remains to be proven that this is a satisfying assignment. Since the
vertices of the smaller parts of clause gadgets have degree $2b$, exactly $b$
vertices of the bigger part of each clause are colored by red and exactly $b$
vertices of the bigger part of each clause are colored by blue. Thus, exactly
$b$ variables in each clause are set to be true. This concludes the proof.
\end{proof}

\section{Conclusion}\label{s:conclusion}

The main goal of this paper was to initiate the study of the computational complexity of covering graphs with semi-edges. 
We have exhibited a new level of difficulty that semi-edges bring to coverings by showing a connection to edge-colorings. 
We have presented a complete classification of the computational complexity of covering graphs with at most two
vertices, which is already a quite nontrivial task. 
In the case of one-vertex target graphs, the problem becomes polynomial-time solvable if the input graph is bipartite, while in the case of two-vertex target graphs, bipartiteness of the input graphs does not help. This provides a strengthening of known results on covering two-vertex graphs without semi-edges.

It is worth noting that the classification in~\cite{n:KPT97a} concerns a more general class of {\em colored mixed} (multi)graphs, i.e., graphs which may have both directed and undirected edges and whose edges come with assigned colors which must be preserved by the covering projections. It turns out that covering a two-vertex (multi)graph is NP-hard if and only if it is NP-hard for at least one of its maximal monochromatic subgraphs. It is shown in~\cite{BokFJKS-DAM24} that the same holds true when semi-edges are allowed (note that all semi-edges must be undirected only).

We end up with an intriguing open problem.

\begin{problem}
Do there exist graphs $H_1$ and $H_2$, both without semi-edges, such that $H_1$ covers $H_2$, and such that the $H_1$-{\sc Cover} is polynomial-time solvable and $H_2$-{\sc Cover} is NP-complete?	
\end{problem}

If semi-edges are allowed, then $H_1=W(0,0,k,0,0)$ and $H_2=F(k,0)$, with $k\ge 3$, is such a pair.

\section*{Acknowledgments}

The authors are grateful to the two anonymous reviewers for their constructive feedback and suggestions.

\bibliographystyle{splncs04}
\bibliography{bib/knizky,bib/nakryti,bib/sborniky,0-main}

\end{document}